\newcolumntype{L}[1]{>{\raggedright\let\newline\\\arraybackslash\hspace{0pt}}m{#1}}
\newcolumntype{C}[1]{>{\centering\let\newline\\\arraybackslash\hspace{0pt}}m{#1}}
\let\oldbibliography\thebibliography
\renewcommand{\thebibliography}[1]{%
    \oldbibliography{#1}%
    \setlength{\itemsep}{-1pt}%
}
\newtheorem{prop}{Proposition}[section]
\newtheorem{lem}{Lemma}[section]
\newcommand{\descr}[1]{\smallskip \noindent \textbf{#1}}
\newcommand{\descrintro}[1]{\vspace{0.8pt} \noindent \textbf{#1}}
\newcommand{\etal}{\textit{et al.\,}}
\newcommand{\sysname}{CRISP\;}
\newcommand{\syscomma}{CRISP}
\newcommand{\file}{\includegraphics[scale=0.02]{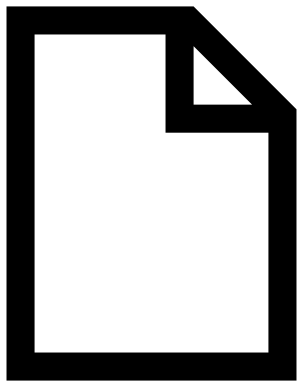}}%
\newcommand{\lock}{\includegraphics[scale=0.02]{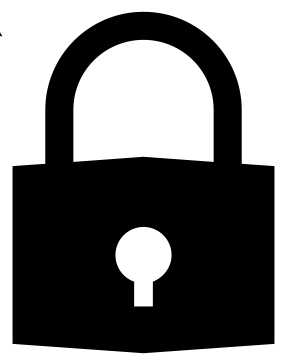}}
\newcommand{\cert}{\includegraphics[scale=0.025]{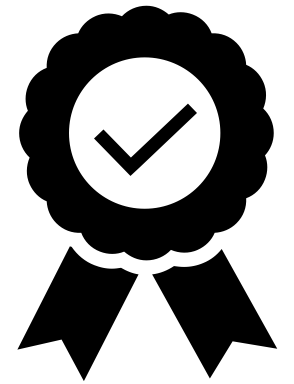}}%
\newcommand{\email}{\sffamily \small \vspace{-8pt}}						
\begin{document}
\title{Privacy and Integrity Preserving Computations with CRISP}

\author{
{\rm Sylvain Chatel}
\and
{\rm Apostolos Pyrgelis}
\and
{\rm Juan Ram\'{o}n Troncoso-Pastoriza}
\and
{\rm Jean-Pierre Hubaux}
} 

\date{}
\maketitle
\vspace{-0.8cm}
\begin{center}
Laboratory for Data Security, EPFL
\end{center}
\begin{center}
\email{first.last@epfl.ch}
\end{center}
\vspace{0.3cm}

\thispagestyle{fancy}
\renewcommand{\headrulewidth}{0pt}
\fancyhf{}
\renewcommand{\footrulewidth}{0.4pt}
\lfoot{\footnotesize{This work has been accepted and will be presented at USENIX Security 2021}} 

\begin{abstract} 
In the digital era, users share their personal data with service providers to obtain some utility, e.g., access to high-quality services. Yet, the induced information flows raise privacy and integrity concerns. Consequently, cautious users may want to protect their privacy by minimizing the amount of information they disclose to curious service providers. Service providers are interested in verifying the integrity of the users' data to improve their services and obtain useful knowledge for their business. In this work, we present a generic solution to the trade-off between privacy, integrity, and utility, by achieving authenticity verification of data that has been encrypted for offloading to service providers. Based on lattice-based homomorphic encryption and commitments, as well as zero-knowledge proofs, our construction enables a service provider to process and reuse third-party signed data in a privacy-friendly manner with integrity guarantees. We evaluate our solution on different use cases such as smart-metering, disease susceptibility, and location-based activity tracking, thus showing its versatility. Our solution achieves broad generality, quantum-resistance, and relaxes some assumptions of state-of-the-art solutions without affecting performance.
\end{abstract}

\section{Introduction}\label{intro}
In our inter-connected world, people share personal information collected from various entities, networks, and ubiquitous devices (i.e., data sources) with a variety of service providers, in order to obtain access to services and applications. Such data flows, which typically involve a user, a data source, and a service provider (as depicted in Figure~\ref{fig:sherpa}), are common for a wide range of use cases, e.g., smart metering, personalized health, location-based activity tracking, dynamic road tolling, business auditing, loyalty programs, and pay-as-you-drive insurance. However, due to conflicting interests of the involved parties, such data interactions inherently introduce a trade-off between \textit{privacy}, \textit{integrity}, and \textit{utility}.

Some users seek to protect their \textit{privacy} by minimizing the amount of personal information that they disclose to \textit{curious} third-parties. Service providers are interested in maintaining the value obtained from the users' data. To this end, service providers are concerned about verifying the \textit{integrity} of the data shared by their users, i.e., ensure that the user's data has been certified by a trusted, external, data source. Both parties want to obtain some \textit{utility} from these data flows: Service providers want to use the data for various computations that yield useful knowledge for their business or services, and users share part of their data to obtain services and applications.
As users might not know upfront the number and details of the computations, they wish to offload their data once to the service provider and be contacted only to authorize the revelation of the result.
Thus, in this work we present a solution that enables flexible computations on third-party signed data offloaded to a service provider in a privacy and integrity preserving manner.

To illustrate the inherent trade-off between privacy, integrity, and utility, we detail some of the use cases(more use cases are described in Appendix~\ref{app:uc}):

\descrintro{Smart Metering.} Smart meters (i.e., data sources) measure the consumption of a user's household. The data is shared with a service provider (e.g., a different legal entity) for billing and load-balancing analysis. A user's privacy can be jeopardized as energy consumption patterns can reveal her habits~\cite{cavoukian_smartprivacy_2010,kumar2019smart}. The service provider wants guarantees on the data integrity to provide reliable services~\cite{asghar2017smart}. \textit{Malicious} users might  cheat to reduce their bills or disrupt the service provider's analysis.

\descrintro{Disease Susceptibility.} Medical centers and direct-to-consumer services~\cite{23andme_dna_nodate,dnafit}, provide a user with her DNA sequence to improve her health and to customize her treatments. Genomic data can be used for disease-susceptibility tests offered by service providers, e.g., research institutions that seek to form the appropriate cohorts for their studies. The user wants to protect her data as DNA is considered a very sensitive and immutable piece of information for her and her relatives~\cite{erlich2014routes}. Correspondingly, service providers are keen on collecting users' data and verifying its integrity so that they can use it for disease-risk estimation or other types of analyses, e.g., drug-effect prediction or health certificates. Malicious users might tamper with the genomic data they share to disrupt this process and pass a medical examination.

\descrintro{Location-Based Activity Tracking.} A user's wearable device monitors her location by querying location providers. The user then shares this information with service providers, e.g., online fitness social networks~\cite{international_garmin_nodate} or insurance companies~\cite{sanitas} to obtain activity certificates or discount coupons. As location data can reveal sensitive information, e.g., her home/work places or habits~\cite{nytimes_lastnight_2018, hern_fitness_2018}, the user is concerned about her privacy. Service providers want legitimate data to issue activity certificates, provide discounts for performance achievements, and build realistic user profiles. Malicious users might be tempted to modify their data, aiming to claim fake accomplishments and obtain benefits they are not entitled to.

\begin{figure}[h]
    \centering
    \includegraphics[width=.6\textwidth]{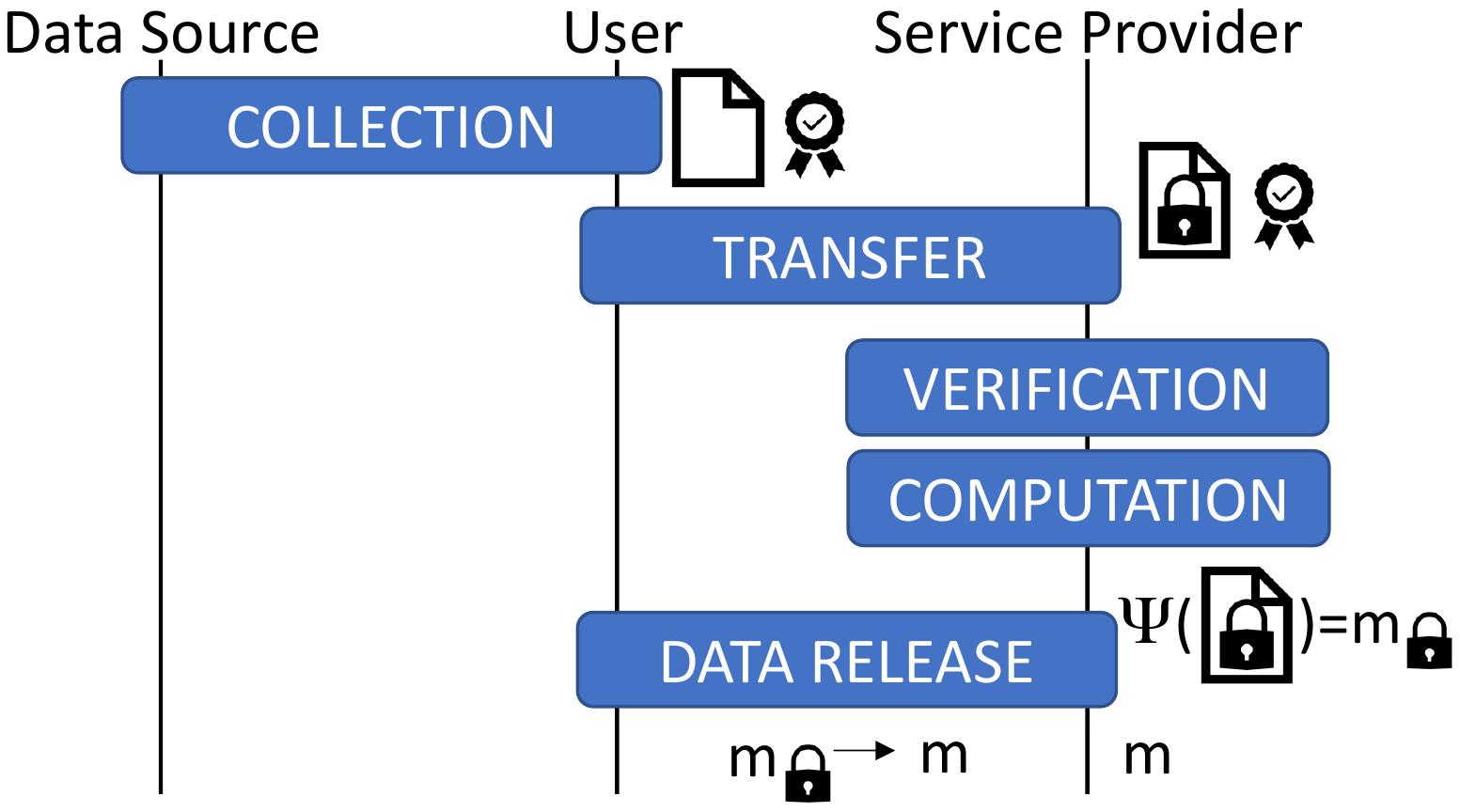}
    \caption{Three-party model and their interaction phases. \protect\file ~is the private information authenticated with \protect\cert. The user protects it via \protect\lock. The service provider computes $\psi(\cdot)$ on the protected data and obtains an output which is revealed as $\text{m}$.}
    \label{fig:sherpa}
\end{figure}

The above use cases fall under the three-party model of Figure~\ref{fig:sherpa}, with (i) malicious users, and (ii) honest-but-curious service providers and data sources; as such, they exhibit the trade-off between privacy, integrity, and utility. To support integrity protection regarding users' data, service providers require a data source to certify it, e.g., by means of a digital signature. This certification should require minimal to no changes to the data source: using only deployed hardware and software infrastructure. Another common denominator is that service providers want to collect users' data and perform various computations. Consequently, users should be able to offload their protected data to service providers (i.e., transfer a copy of the data only once) in such a way that their privacy is preserved, the data integrity can be verified, and various flexible computations are feasible.

A simple solution is to establish a direct communication channel between the data source and the service provider. This way, the data source could compute the operations queried by the service provider on the user's data. However, this would prevent the user from remaining in control of her data and require the data source to bear computations that are outside of its interests. Another approach is to let the data source certify the user's data by using specialized digital signature schemes such as homomorphic signatures~\cite{boneh_homomorphic_2011,catalano_practical_2013,catalano2014generalizing,catalano_security_2018,gorbunov_leveled_2015} or homomorphic authenticators~\cite{ahn_computing_2015,cheon_multi-key_2016,gennaro_fully_2013,matsui_context_2019}. Thus, the user could locally compute the queried operation and provide the service provider with the result and a homomorphic signature attesting its correct computation on her data. However, this would require software modifications at the data source, which would come at a prohibitive cost for existing services, and introduce significant overhead at the user.

In the existing literature, several works specialize in the challenges imposed by the above use cases but provide only partial solutions by either addressing privacy~\cite{ayday_protecting_2013,ChenPDAFT2015,danezis_fast_2014,decristofaro2013secure,lauter_private_2015,li2015pda}, or integrity~\cite{Buchmann2019,chiang2009secure,BamHash,saroiu2009enabling}. The handful of works addressing both challenges require significant modifications to existing hardware or software infrastructures. For instance, SecureRun~\cite{pham_securerun:_2016}, which achieves privacy-preserving and cheat-proof activity summaries, 
requires heavy modifications to the network infrastructure. Similarly, 
smart metering solutions using secure aggregation, 
e.g.,~\cite{abdallah_lightweight_2018,li_preserving_2012,lu2012eppa}, rely on specialized signature schemes that are not yet widely supported by current smart meters. These approaches are tailored to their use case and cannot be easily adapted to others, hence there is the need for a \emph{generic} solution to the trade-off between privacy and integrity, without significantly degrading utility.

ADSNARK~\cite{backes_adsnark:_2015} is a generic construction that could be employed to address the trade-off between privacy, integrity, and utility. In particular, it enables users to locally compute on data certified by data sources and to provide proof of correct computation to service providers. However, ADSNARK does not support the feature of data offloading that enables service providers to reuse the collected data and to perform various computations. Indeed, ADSNARK and other zero-knowledge solutions~\cite{fournet_zql:_nodate,fredrikson2014zo,ben-sasson_scalable_2019}, require the user to compute a new proof every time the service provider needs the result of a new computation. Furthermore, it requires a trusted setup, and is not secure in the presence of quantum adversaries~\cite{katz_improved_2018}. The latter should be taken into account considering recent advances in quantum computing~\cite{arute_quantum_2019} and the long term sensitivity of some data.

In this work we propose \sysname (privaCy and integRIty preServing comPutations),
a novel solution that achieves utility, privacy, and integrity; it is generic, supports data offloading with minimal modification to existing infrastructures, relaxes the need for a trusted setup, and is quantum-resistant. Motivated by the need to protect users' privacy and by the offloading requirement to support multiple computations on their data, \sysname relies on quantum-resistant lattice-based approximate homomorphic encryption (HE) primitives~\cite{cheon_homomorphic_2017} that support flexible polynomial computations on encrypted data without degrading utility. To ensure data integrity, we employ lattice-based commitments~\cite{baum_more_2018} and zero-knowledge proofs~\cite{chase_post-quantum_2017} based on the multi-party-computation-in-the-head (or MPC-in-the-head) paradigm~\cite{ishai_zero-knowledge_2009}, which enable users to simultaneously convince service providers about the correctness of the encrypted data, as well as the authenticity of the underlying plaintext data, using the deployed certification mechanism.

We evaluate our solution on three use cases covering a wide range of applications and computations: smart metering, disease susceptibility, and location-based activity-tracking. Our experimental results show that our construction introduces acceptable computation overhead for users to privately offload their data and for service providers to both verify its authenticity and to perform the desired computations. The magnitude of the communication overhead fluctuates between tens and hundreds of mega bytes per proof and is highly dependent on the use case and its security requirements. To this end, in Section~\ref{eval}, we also present different optimizations that can reduce the proof size, thus making our construction practical for real-life scenarios. Additionally, we demonstrate that \sysname achieves high accuracy in the computations required by the use cases, yielding an average absolute accuracy of more than $99.99$\% over the respective datasets. Compared to the state of the art~\cite{backes_adsnark:_2015}, we reach comparable performance and achieve post-quantum security guarantees with more flexibility in the computations.

Our contributions are the following:
\begin{itemize}
\itemsep-0.5em 
\item A generic, quantum-resistant solution that enables privacy and integrity preserving computations in the three-party model of Figure~\ref{fig:sherpa}, with minimal modifications of the existing infrastructure;
\item the necessary primitives to achieve authenticity verification of homomorphically encrypted data in the quantum random oracle model;
\item an implementation of \sysname~\cite{codeRepo} and its performance evaluation on various representative use cases that rely on different types of computations and real-world datasets.
\end{itemize}
To the best of our knowledge, it is the first time such a solution is proposed.

This paper is organized as follows: In Section~\ref{model}, we discuss the system and threat model on which our construction operates. In Section~\ref{prelims}, we introduce useful cryptographic primitives. Then, we present \syscomma's architecture in Section~\ref{archi} and in Section~\ref{analysis} we perform its privacy and security analysis. In Section~\ref{eval}, we evaluate \sysname on various use cases and in Section~\ref{discussion} we discuss some of its aspects. We review the related work in Section~\ref{related} and conclude in Section~\ref{conc}.

\section{Model}\label{model}

We describe the model, assumptions, and objectives of \syscomma.%

\subsection{System Model}\label{model:system}

We consider three entities: a user, a service provider, and a data source, as depicted in Figure~\ref{fig:sherpa}. The user obtains from the data source certified data about herself and/or her activities, she subsequently shares it with the service provider to obtain some service. The user is interested in sharing (i.e., offloading) her data while protecting her privacy, i.e., she wants to have full control over it but still obtain utility from the service provider. The service provider is interested in (i) verifying the authenticity of the user's data, and (ii) performing on it multiple computations that are required to provide the service and/or improve its quality. The data source can tolerate only minimal changes to its operational process and cannot cope with any heavy modification to the underlying infrastructure and dependencies of the hardware and software. Finally, we assume the existence of a public key infrastructure that verifies the identities of the involved parties as well as secure communication channels between the user and the data source, and between the user and the service provider.

%

\subsection{Threat Model}\label{model:threat}
We present the assumed adversarial behavior for the three entities of our model with computationally bounded adversaries.

\descr{Data Source.} The data source is considered honest and is trusted to generate valid authenticated data about the users' attributes or activities.

\descr{Service Provider.} The service provider is considered \textit{honest-but-curious}, i.e., it abides by the protocol and does not engage in denial-of-service attacks. However, it might try to infer as much information as possible from the user's data and perform computations on it without the user's consent. 

\descr{User.} We consider a \textit{malicious but rational} user. In other words, she engages in the protocol and will try to cheat only if she believes that she will not get caught -- and hence be identified and banned -- by the service provider. This type of adversary is also referred to as \emph{covert} in the literature~\cite{aumann2007security}. The user is malicious in that she might try to modify her data, on input or output of the data exchange, in order to influence the outcome of the service provider's computations to her advantage. Nonetheless, the user is rational, as she desires to obtain utility from the service provider and thus engages in the protocol.

\subsection{Objectives}\label{model:objectives}
Overall, the main objective of our construction is to provide the necessary building blocks for secure and flexible computations in the considered three-party model. To this end, user's privacy should be protected by keeping her in control of the data even in a post-quantum adversarial setting, and the service provider's utility should be retained by ensuring the integrity of the processed data. The above objectives should be achieved by limiting the impact on already deployed infrastructures, thus, by requiring only minimal changes to the data source's operational process. More formally, the desired properties are: (a) \textbf{Utility}: Both user and service provider are able to obtain the correct result of a public computation on the user's private data; (b) \textbf{Privacy}: The service provider does not learn anything more than the output of the computation on the user's private data; and (c) \textbf{Integrity}: The service provider is ensured that the computation is executed on non-corrupted data certified by the data source.
%

\section{Preliminaries}\label{prelims}
We introduce the cryptographic primitives used in Section~\ref{archi} to instantiate \syscomma. In the remainder of this paper, let $a \leftarrow \chi$ denote that $a$ is sampled from a distribution $\chi$; a vector be denoted by a boldface letter, e.g., $\textbf{x}$, with $\textbf{x}[i]$ its $i$-th element and $\textbf{x}^T$ its transpose. For a complex number $z \in \mathbb{C}$, we denote by $\bar{z}$ its conjugate. Moreover, let $\|$ denote the concatenation operation, $\bm{I}_n$ the identity matrix of size $n$, and $\bm{0}_{k}$ a vector of $k$ zeros.

\subsection{Approximate Homomorphic Encryption}\label{prelims:encryption}
Homomorphic encryption is a particular type of encryption that enables computations to be executed directly on ciphertexts. 
The most recent and practical homomorphic schemes rely on the hardness of the Ring Learning with Errors (RLWE) problem which states that, given a polynomial ring $\mathcal{R}_q$, for a secret polynomial $s$, it is computationally hard for an adversary to distinguish between $(a, a \cdot s+e)$ and $(a, b)$, where $e$ is a short polynomial sampled from a noise distribution, and $a, b$ are polynomials uniformly sampled over $\mathcal{R}_q$.

Cheon \etal recently introduced the CKKS cryptosystem~\cite{cheon_homomorphic_2017} (improved in \cite{cheon2018RNS}), an efficient and versatile leveled homomorphic scheme for approximate arithmetic operations. An approximate homomorphic encryption scheme enables the execution of approximate additions and multiplications on ciphertexts without requiring decryption. 
It uses an isomorphism between complex vectors and the plaintext space $\mathcal{R}_q{=}\mathbb{Z}_{q}[X] /(X^N{+}1)$, where $q$ is a large modulus, and $N$ is a power-of-two integer. 
The decryption of a ciphertext yields the input plaintext in $\mathcal{R}_q$ with a small error. This small error can be seen as an approximation in fixed-point arithmetic.

In CKKS, given a ring isomorphism between $\mathbb{C}^{N/2}$ and $\mathbb{R}[X] /(X^N{+}1)$, a complex vector $\bm{z} {\in} \mathbb{C}^{N/2}$ can be encoded into a polynomial $m$ denoted by a vector $\textbf{m}$ of its coefficients $\{m_0, {\dots}, m_{N-1}\} {\in} \mathbb{R}^N$ as $\textbf{m}{=}\frac{1}{N}(\bar{\bm{U}}^T {\cdot} \bm{z} {+} \bm{U}^T {\cdot} \bm{\bar{z}})$, where $\bm{U}$ denotes the $(N/2){\times} N$ Vandermonde matrix generated by the $2N$-th root of unity $\zeta_j{=}e^{5^j\pi i/N}$. This transformation is extended to $\mathcal{R}_q$ by a quantization. Then, considering a maximum number of levels $L$, a ring modulus $q{=}\prod_{i=0}^{L-1} q_i$ is chosen with $\{q_i\}$ a set of number theoretic transform (NTT)-friendly primes such that $\forall i {\in} [0,L-1], \, q_i {=} 1 \mod 2N$.

Let $\chi_\text{err}, \chi_\text{enc}$, and $\chi_\text{key}$, be three sets of small distributions over $\mathcal{R}_q$. Then, for an encoded plaintext $m \in \mathcal{R}_q$, the scheme works as follows:

\descr{KeyGen($\lambda,N,L,q$)}: for a security parameter $\lambda$ and a number of levels $L$, generate $\bm{sk}{=}(1, s)$ with ${s \leftarrow \chi_\text{key}}$, $\bm{pk}{=}(b, a)$ with ${a {\leftarrow} \mathcal{R}_{q}}$, ${b{=}-a \cdot s + e \mod q}$, and ${e {\leftarrow} \chi_\text{err}}$. Additional keys which are useful for the homomorphic computations (i.e., rotation, evaluation keys, etc.) are denoted by $\bm{evk}$. We refer the reader to~\cite{cryptoeprint:2019:688} for further details.

\descr{Encryption($m, \bm{pk}$)}: for $r_0 \leftarrow \chi_\text{enc}$ and $e_0, e_1 \leftarrow \chi_\text{err}$, output $\bm{ct} {=} (ct_0,ct_1){=}r_0 \cdot \bm{pk} + (m+e_0, e_1) \mod \; q $.

\descr{Decryption($\bm{sk}, \bm{ct}$)}: Output $\hat{m} {=} \langle \bm{ct}, \bm{sk}\rangle \mod  q_l $, where $\langle \cdot, \cdot \rangle$ denotes the canonical scalar product in $\mathcal{R}_{q_l}$ and $l$ the current level of the ciphertext.

For brevity, we denote the above three operations as $\text{KeyGen}(\lambda,N,q)$, $\text{Enc}_{\bm{pk}}(m)$, and $\text{Dec}_{\bm{sk}}(\bm{ct})$, respectively. The scheme's parameters are chosen according to the security level required (see~\cite{HomomorphicEncryptionSecurityStandard}) to protect the inputs and \textit{privacy}.


\subsection{BDOP Commitment}\label{prelims:commit}

Baum \etal~\cite{baum_more_2018} proposed the BDOP commitment scheme, that enables us to prove in zero-knowledge certain properties of the committed values to a verifier. Based on lattices, this scheme also builds on a polynomial ring $\mathcal{R}_q {=} \mathbb{Z}_q/(X^N{+}1)$, with the notable exception that $q$ is a prime that satisfies $q {=} 2d{+}1\mod4d$, for some power-of-two $d$ smaller than $N$.

BDOP is based on the hardness assumption of the module Short Integer Solution (SIS) and module Learning with Error (LWE)~\cite{langlois_worst-case_2015} to ensure its \textit{binding} and \textit{hiding} properties. 
We refer the reader to~\cite{baum_more_2018} for more details. For a secret message vector $\bm{m}{\in} \mathcal{R}^{l_c}_{q}$, and for a commitment with parameters $(n,k)$, two public rectangular matrices $\bm{A}_1'$ and $\bm{A}_2'$, of size $n{\times}(k{-}n)$ and ${l_c}{\times}(k{-}n{-}{l_c})$ respectively, are created by uniformly sampling their coefficients from $\mathcal{R}_{q}$. To commit the message $\bm{m}$, we sample $\bm{r}_c{\leftarrow} \mathcal{S}_\beta^k$, where $\mathcal{S}_\beta^k$ is the set of elements in $\mathcal{R}_{q}$ with $l_{\infty}$-norm at most $\beta$ and bounded degree, and compute
\begingroup
\[ \text{BDOP}(\bm{m},\bm{r}_c) {=} \begin{pmatrix} c_1\\ \bm{c_2} \end{pmatrix} {=} \begin{pmatrix} \bm{A}_1 \\ \bm{A}_2 \end{pmatrix} \cdot \bm{r}_c + \begin{pmatrix} \bm{0}_n\\ \bm{m} \end{pmatrix},\]
\endgroup
with $\bm{A}_1  {=} [\bm{I}_n \| \bm{A}_1' ]$ and $\bm{A}_2 {=} [\bm{0}_{l_c \times n} \| \bm{I}_{l_c} \| \bm{A}_2' ]$.

The BDOP commitment scheme can be used, with a $\Sigma$-protocol, to provide a \textit{bound proof}: proof that a committed value is in a bounded range~\cite{baum_ecient_nodate}. The main rationale behind this is to prove in zero-knowledge that the committed value plus a small value has a small norm. 
Given a commitment $\bm{c}{=}\text{BDOP}(\bm{m},\bm{r}_c)$, the prover computes a commitment for a vector of small values $\bm{\mu}$ as  $\bm{t}{=}\text{BDOP}(\bm{\mu},\bm{\rho})$ and commits to this commitment in an auxiliary commitment $c_{\text{aux}}{=}C_{\text{aux}}(\bm{t})$. The verifier selects a challenge $d\in\{0,1\}$ and sends it to the prover who verifies its small norm and eventually opens $c_{\text{aux}}$. The prover also opens $\bm{t}+d \cdot \bm{c}$ to $\bm{z}{=}\bm{\mu}+d \cdot \bm{m}$ and $\bm{r}_z{=}\bm{\rho}+d \cdot \bm{r}_c$. Upon reception, the verifier checks that $\text{BDOP}(\bm{z},\bm{r}_z){=}\bm{t}+d \cdot \bm{c}$ and that the norms are small. The protocol, presented in Appendix~\ref{app:bdopBP}, is repeated to increase soundness and can be made non-interactive using the Fiat-Shamir heuristic.

\subsection{Zero-Knowledge Circuit Evaluation}\label{prelims:zkce}

Zero-knowledge circuit evaluation (ZKCE) protocols enable a user to prove the knowledge of an input that yields a public output on an arithmetic or Boolean circuit that implements a specific public function~\cite{chase_post-quantum_2017,giacomelli_zkboo_2016}. A circuit is defined as a series of gates connected by wires. Based on the multi-party computation (MPC) \textit{in-the-head} approach from Ishai \etal~\cite{ishai_zero-knowledge_2009}, ZKCE techniques emulate players and create a decomposition of the circuit (see Appendix~\ref{app:zkb}). The secret is shared among the emulated players, who evaluate the circuit in a MPC fashion and commit to their respective states. The prover then reveals the states of a subset of players depending on the verifier's challenge. By inspecting the revealed states, the verifier builds confidence in the prover's knowledge.



In particular, ZKB++~\cite{chase_post-quantum_2017} is a $\Sigma$-protocol for languages of the type $\{y \, | \, \exists x \; \text{s.t.} \\ y{=}\Phi(x)\}$, where $\Phi(\cdot)$ is the representation of the circuit. With randomized runs, the verifier builds confidence in the prover's knowledge of the secret. The number of iterations is determined according to the desired soundness: For instance, to prove the knowledge of a message that yields a specific SHA-256 digest, a security level of 128-bits requires 219 iterations. The proof size is linked to the number of iterations but also to the number of gates that require non-local computations (e.g., AND for Boolean circuits, multiplication for arithmetic ones). Compared to earlier work, i.e., ZKBoo~\cite{giacomelli_zkboo_2016}, ZKB++ reduces the proof size by not sending information that can be computed by the verifier.
The security of ZKB++ is based on the quantum random oracle model. Overall, it achieves the following properties: (a) \textit{2-privacy}, opening two out of the three players' views to the verifier reveals no information regarding the secret input, (b) \textit{soundness}, a correct execution yields a valid witness with soundness error linked to the number of iterations, and (c) \textit{completeness}, an honest execution of ZKB++ ensures a correct output.

\section{Architecture}\label{archi}

We now present our construction that enables computations on third-party certified data in a privacy and integrity preserving manner. It builds on (i) CKKS to encrypt the data and enable computations on it, and (ii) MPC-in-the-head and BDOP commitments to simultaneously verify a custom circuit that checks the integrity of the data and its correct encryption.
Its workflow is decomposed into five phases: \textit{collection}, \textit{transfer}, \textit{verification}, \textit{computation}, and \textit{release}. (1) In the collection phase, the user obtains data about herself or her activities from the data source, along with a certificate that vouches for its integrity and authenticity. (2) The user then encrypts the data, generates a proof for correct encryption of the certified data, and sends it with the ciphertexts to the service provider. (3) The service provider verifies the proof in the verification phase. Then, (4) it performs the desired computations on it, and (5) communicates with the user to obtain the corresponding result in the release phase.

\subsection{Collection Phase}\label{archi:CP}

In this phase, the user (identified by her unique identifier $uid$) collects from the data source certified data about herself or her activities. The data source certifies each user's data point $\bm{x}$ using a digital signature $\sigma(\cdot)$ that relies on a cryptographic hash function $H(\cdot)$ to ensure integrity. We opt for SHA-256 as the hash function due to its widespread use as an accepted standard for hash functions~\cite{fips_180-4}; our solution works with any signature scheme building on it. For example, Bernstein \etal~\cite{BernsteinSPHINCS19} recently proposed a quantum-secure signature scheme employing SHA-256. In more detail, the data source generates a payload $msg{=}\{nonce, uid, \bm{x}\}$ and sends to the user a message $M_0$ defined by: $M_0 {=} \{msg, \sigma(H(msg))\}$.

\begin{figure*}[t]
    \centering
    \includegraphics[width=\textwidth]{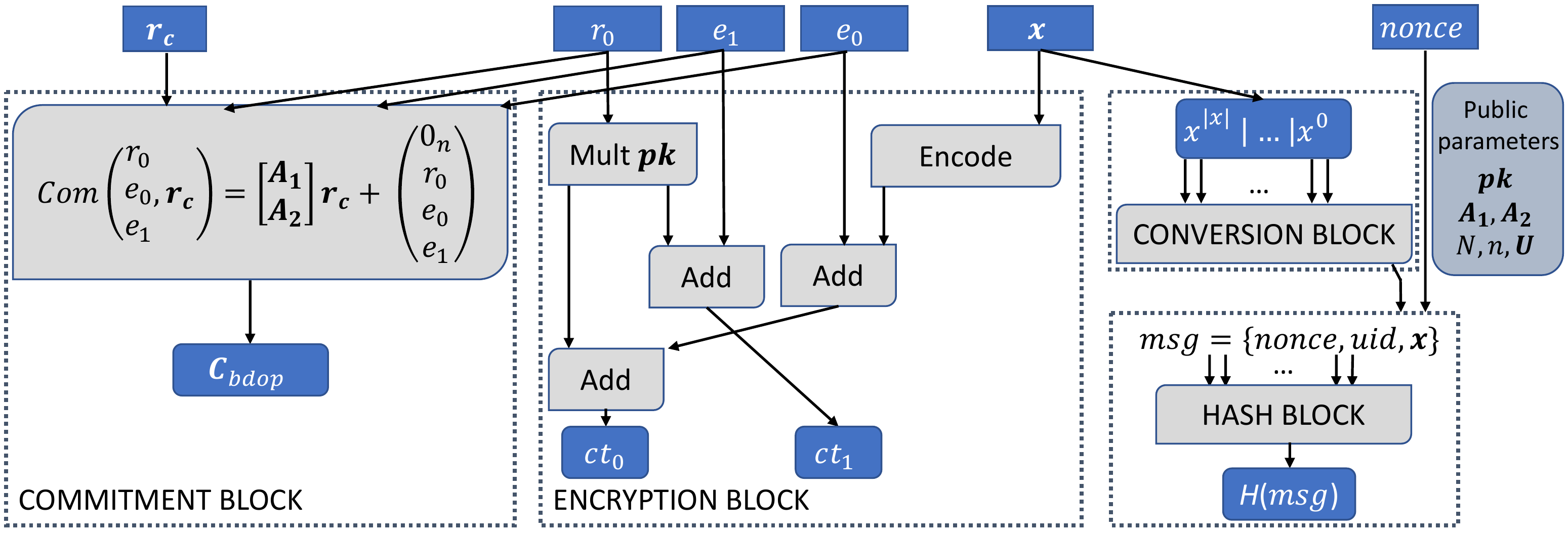}
    
    \caption{Overview of the verification circuit $\mathcal{C}$. Its inputs are denoted by rectangles and its outputs by rounded rectangles.}
    \label{fig:zkce}
    
\end{figure*}

\subsection{Transfer Phase}\label{archi:TP}

In this phase, the user protects her certified data points with the CKKS homomorphic encryption scheme (see Section~\ref{prelims:encryption}) and generates a proof of correct protection. To this end, \sysname employs a ZKCE approach to simultaneously prove the integrity of the underlying data and its correct encryption, i.e., to convince a service provider that the noises used for encryption did not distort the plaintexts. In particular, the user evaluates a tailored circuit $\mathcal{C}$ (depicted in Figure~\ref{fig:zkce}) that (i) computes the encryption of the data with the CKKS scheme, (ii) generates BDOP commitments to the noises used for encryption, and (iii) produces the hash digests of the messages signed by the data source to verify their integrity. 
For ease of presentation, we describe the circuit that processes one data point $\bm{x}$. However, this can easily be extended to a vector $\bm{d}$ obtained from multiple data points $\{\bm{x}_i\}$. The circuit's structure is publicly known and its public parameters are the encryption public information $\bm{pk}, \bm{U}, N$, the matrices $\bm{A}_1, \bm{A}_2$ used in the BDOP commitment scheme and its parameter $n$, and additional information such as the user's identifier.
The circuit's private inputs are the user's secret data point $\bm{x}$ and nonce, the encryption private parameters $r_0$, $e_0$, and $e_1$, and the private parameters of the BDOP commitment scheme $\bm{r}_c$. These inputs are arithmetically secret-shared among the three simulated players, according to the ZKB++ protocol. The outputs of the circuit are the ciphertext $\bm{ct}$, the commitment to the encryption noises $\bm{C}_{\text{bdop}}{=}\text{BDOP}({( r_0, e_0, e_1)^T},\bm{r}_c)$, and the digest of the message $H(msg)$ signed by the data source.

CKKS and BDOP operate on slightly different polynomial rings, as described in Section~\ref{prelims}. Consequently, we extend BDOP to the composite case where $q$ is a product of NTT-friendly primes. We relax the strong condition on the challenge space from~\cite{baum_more_2018} that \textit{all} small norm polynomials in $\mathcal{R}_q$ be invertible. This condition is required for additional zero knowledge proofs that are not used in our construction. We simply require that the challenge space of invertible elements be large enough to ensure the binding property of the commitment. In particular, considering that the divisors of zero in $\mathcal{R}_q$ are equally distributed in a ball $\mathcal{B}$ of norm $\beta_\text{c}$ as in $\mathcal{R}_q$, the probability of having a non-invertible element when uniformly sampling from $\mathcal{B}$ is at most $\frac{N\cdot L}{2^l}$, where $L$ is the number of prime factors in $q$, each having at least $l$ bits. As a result, the number of invertible elements in $\mathcal{B}$ is lower-bounded by ${|\mathcal{B}|*(1{-}\frac{N\cdot L}{2^l})}$, where $|\mathcal{B}|{=}(\beta_\text{c}+1)^N$ is the cardinality  of the ball. Thus, by adequately choosing $\beta_\text{c}$ and the product of primes, we create a sufficiently large challenge set of small-norm invertible elements in $\mathcal{R}_q$ (e.g., $>2^{256}$). Moreover, we note that our circuit requires computations to be executed on the underlying arithmetic ring $\mathbb{Z}_{q}$ used for the lattice-based encryption and commitment schemes, as well as a Boolean ring $\mathbb{Z}_{2}$ for the computation of the SHA-256 hash digests. We also design a block that converts MPC-in-the-head arithmetic shares of the input data of the circuit into Boolean ones.

Overall, our circuit $\mathcal{C}$ consists of four blocks, showed in Figure~\ref{fig:zkce}: encryption, commitment, conversion, and hash block.

\descr{Encryption Block.} This block operates in the arithmetic ring $\mathbb{Z}_{q}$ and takes as inputs the vector of integers in $\mathbb{Z}_{q}$ derived by quantization from the plaintext $\bm{x}$ produced during the data collection phase (see Section~\ref{archi:CP}), as well as the encryption with private noise parameters $r_0$, $e_0$, and $e_1$. It first encodes the secret input data to a polynomial $m \in \mathcal{R}_q$ before computing the ciphertext ${\bm{ct}{=}(ct_0, ct_1){=}r_0 \cdot \bm{pk} + (m + e_0, e_1) \mod q}$. This step requires only affine operations that can be computed locally for each simulated player of ZKB++ protocol. The encryption block is depicted in the middle part of Figure~\ref{fig:zkce}.

\descr{Commitment Block.} This block also operates in the arithmetic ring $\mathbb{Z}_{q}$; its inputs are the private parameters of the 
encryption (i.e., $r_0$, $e_0$, and $e_1$) and commitment (i.e., $\bm{r}_c$) schemes. As the commitment scheme has the same external structure as the encryption one, this block operates equivalently and returns $\text{BDOP}((r_0, e_0, e_1)^T,\bm{r}_c)$, requiring only local operations at each simulated player. An overview of the commitment block is shown in the leftmost part of Figure~\ref{fig:zkce}.

\descr{Conversion Block.} This block enables us to interface two types of circuits that would otherwise be incompatible when following a ZKCE approach. The main idea is to transform an arithmetic secret sharing into a Boolean secret sharing in the context of MPC-in-the-head. Let $[x]_B$ denote the Boolean sharing of a value $x$ and $[x]_A$ its arithmetic one.
An arithmetic additive secret sharing in $\mathbb{Z}_{q}$ splits $x$ into three sub-secrets $x_0$, $x_1$, and $x_2$ such that ${x {=} x_{0}{+}x_1{+}x_2\mod q}$. Let $x_i^k$, be the $k$-th bit of the arithmetic sharing of the secret $x$ for player $i$. A Boolean sharing $[x]_B$ cannot be directly translated from $[x]_A$ as the latter does not account for the carry when adding different bits. Considering that the modulus $q$ can be represented by $|q|$ bits, the conversion block generates $|q|$ Boolean sub-secrets $[y]^j_B{=}\{y^j_0, y^j_1, y^j_2\}_B$, such that
\begingroup
\[ \forall j\in [1, |q|] :\; x^{j} = \bigoplus_{i=0}^{2} y^{j}_i,\]
\endgroup
where $\oplus$ denotes the XOR operation (i.e., addition modulo~2), and $x^{j}$ is the $j$-th bit of $x$.
When designing such a block in the MPC-in-the-head context, we must make the circuit (2,3)-decomposable (see Appendix~\ref{app:zkb}) and ensure the 2-privacy property, i.e., revealing two out of the three players' views to the verifier should not leak any information about the input.

To reconstruct the secret in zero-knowledge and obtain a bit-wise secret sharing, the procedure is as follows: For every bit, starting from the least significant one, the conversion block computes (i) the sum of the bits held by each player, plus the carry from the previous bits, and (ii) the carry of the bit. The computation of the carry requires interaction between the different players (i.e., making the operation a ``multiplicative" one), hence we design a conversion block with a Boolean circuit that minimizes the amount of multiplicative gates.

More precisely, we design a bit-decomposition block for MPC-in-the-head building on Araki \etal's optimized conversion~\cite{araki_generalizing_2018} between a power-of-two arithmetic ring and a Boolean ring. Let $\text{Maj}(\cdot)$ be the function returning the majority bit among three elements. Then, the conversion circuit, for every bit $k \in [1, |x|]$, does the following:

\begin{enumerate}
\itemsep-0.25em 
\begingroup
\item locally reads $[\alpha_k]_B {=} \{x_0^{k}, x_1^{k}, x_2^{k}\}$ (i.e., for each player);
\item computes the first carry $[\beta_k]_B$ amongst those inputs:
\[\beta_k{=}\text{Maj}( x_0^{k}, x_1^{k}, x_2^{k}) {=} (x_0^{k} \oplus x_2^{k} \oplus 1)(x_1^{k} \oplus x_2^{k}){\oplus} x_1^{k} ;\]
\endgroup
\item  computes the second carry $[\gamma_k]_B$ among those inputs with $\gamma_0{=}\beta_0{=}0$:
\begingroup
\[\gamma_k{=}\text{Maj}( \alpha_k, \beta_{k-1}, \gamma_{k-1}) {=} (\alpha_k \oplus  \gamma_{k-1} \oplus 1)(\beta_{k-1} \oplus  \gamma_{k-1})\oplus \beta_{k-1} ;\]
\item sets the new Boolean sharing of the secret to 
\[ [y]^{k}_B {=} [\alpha_k] \oplus [\beta_{k-1}] \oplus [\gamma_{k-1}].\]
\endgroup
\end{enumerate}

\noindent To the best of our knowledge, this is the first time a bit-decomposition circuit is used for MPC-in-the-head, which enables to interface circuits working in different rings.

\descr{Hash Block.} This block uses the SHA-256 circuit presented in~\cite{giacomelli_zkboo_2016} to compute the hash digest of the message $msg{=}\{nonce, uid, \bm{x}\}$ signed by the data source in the collection phase.

\descr{Full Circuit.} With the above building blocks, and following the ZKB++ protocol, the user generates a proof that can convince the service provider that she has not tampered with the data obtained by the data source.

Furthermore, using BDOP's bound proof protocol (see Section~\ref{prelims:commit}) the user produces a proof of correct encryption, i.e., that the encryption noise has not distorted the underlying plaintext. The cryptographic material of the combined proofs (ZKCE \& BDOP) is denoted by $\mathcal{P}$. At the end of the transfer phase, the user sends to the service provider the message: $$M_1{=}\{\bm{ct}, \bm{C}_{\text{bdop}}, \mathcal{P}, H(msg), \sigma(H(msg))\}.$$ 


\subsection{Verification Phase}\label{archi:VP}
Upon reception of a message $M_1$, the service provider verifies the signature using the provided hash digest. If satisfied, it verifies the proof $\mathcal{P}$ by first evaluating the circuit $\mathcal{C}$ following the ZKB++ protocol and then checking the bound proof for the encryption noises. Hence, it is assured that $\bm{ct}$ is the encryption of a data point $\bm{x}$ giving the hash that has been certified by the data source.

\subsection{Computation Phase}\label{archi:ComP}

Using the homomorphic capabilities of the CKKS encryption scheme, the service provider can perform any operation with a bounded predefined multiplicative depth (and arbitrary depth, with bootstrapping~\cite{cheon_bootstrapping_2018}) on validated ciphertexts received by the user. In particular, CKKS enables the computation of a wide range of operations on ciphertexts: additions, scalar operations, multiplications, and a rescaling procedure that reduces the scale of the plaintexts. Those functions enable the computation of polynomial functions on the ciphertexts. Moreover, it supports the evaluation of other functions such as exponential, inverse or square root~\cite{cheon_homomorphic_2017,cheon_bootstrapping_2018, cheon_numerical_2019}, by employing polynomial approximations (e.g., least squares). Hence, the service provider can independently compute any number of operations on the user's encrypted data simply requiring interactions with the user to reveal their outputs (see Section~\ref{archi:RP}).

\subsection{Release Phase}\label{archi:RP}

\begin{figure}[t]
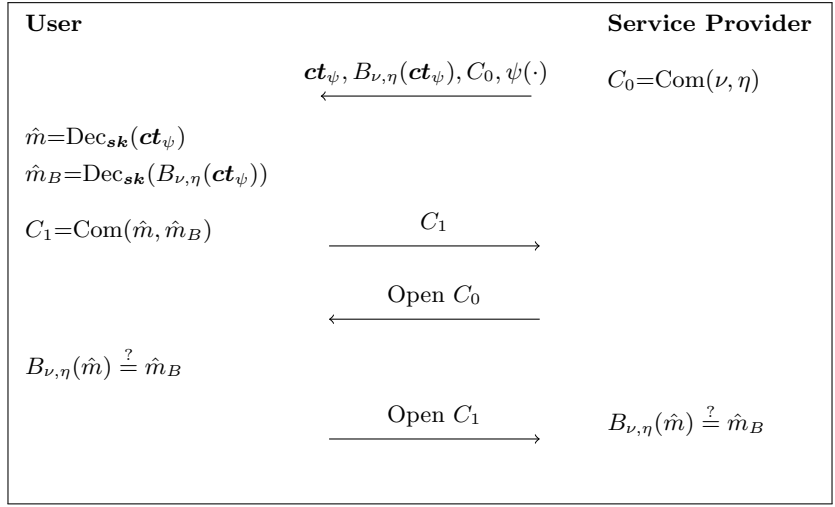

\begin{center}
\vspace{0.5em}
\fbox{
\pseudocode[colspace=1em]{
\textbf{User} \>\> \textbf{Service Provider}\\
\> \sendmessageleft*[2.8cm]{\bm{ct}_\psi, B_{\nu, \eta}(\bm{ct}_\psi), C_0, \psi(\cdot)} \> C_{0}{=}\text{Com}(\nu, \eta)\\
\hat{m}{=}\text{Dec}_{\bm{sk}}(\bm{ct}_\psi)\\
\hat{m}_B{=}\text{Dec}_{\bm{sk}}(B_{\nu,\eta}(\bm{ct}_\psi))\\
C_1{=}\text{Com}(\hat{m},\hat{m}_B) \> \sendmessageright*[2.8cm]{C_1} \>\\
\> \sendmessageleft*[2.8cm]{\text{Open $C_0$}}\>\\
B_{\nu,\eta}(\hat{m})\stackrel{?}{=} \hat{m}_B \>\>\\
\>\sendmessageright*[2.8cm]{\text{Open $C_1$}} \> B_{\nu,\eta}(\hat{m})\stackrel{?}{=} \hat{m}_B\\
}
}

\caption{Release protocol for a computed value $\hat{m}$.}
\label{fig:datarelease}
\end{center}
\end{figure}

At the end of the computation phase, the service provider holds a ciphertext of the desired output that can only be decrypted by the holder of the secret key. To this end, the service provider and the user engage in a two-round release protocol, which ensures the service provider that the decrypted output is the expected result of the computation on the user's data. The release protocol is depicted in Figure~\ref{fig:datarelease} and detailed next.

Let $\bm{ct}_\psi$ denote the ciphertext obtained by the service provider after performing computations on validated ciphertext(s), and $\hat{m}$ the corresponding plaintext. First, the service provider informs the user of the computation $\psi(\cdot)$ whose result it wants to obtain. Then, the service provider homomorphically blinds $\bm{ct}_\psi$ by applying the function $B_{\nu, \eta}(x) {=} \nu {\cdot} x {+} \eta$, with $\nu$ and $\eta$ uniformly sampled in $\mathbb{Z}_q^*$ and $\mathbb{Z}_q$ resp., and commits to the secret parameters used for blinding (i.e., $\nu, \eta$) using a hiding and binding cryptographic commitment $\text{Com}(\cdot)$ as $C_0{=}\text{Com}(\nu, \eta)$. A hash-based commitment scheme can be used for this purpose~\cite{chase_post-quantum_2017}. Subsequently, the service provider sends to the user the encrypted result $\bm{ct}_\psi$, its blinding $B_{\nu, \eta}(\bm{ct}_\psi)$, and the commitment $C_0$. Upon reception, the user checks if the function $\psi(\cdot)$ is admissible. If the user accepts the computation $\psi(\cdot)$, she decrypts both ciphertexts as: $\text{Dec}_{\bm{sk}}(\bm{ct}_\psi){=}\hat{m}$ and $\text{Dec}_{\bm{sk}}(B_{\nu, \eta}(\bm{ct}_\psi)){=} \hat{m}_B$. 
Then, she commits to the decrypted results, i.e., $C_1{=}\text{Com}(\hat{m}, \hat{m}_B)$, and communicates $C_1$ to the service provider who opens the commitment $C_0$ to the user (i.e., revealing $\nu, \eta$). The user verifies that the initial blinding was correct by checking if $B_{\nu, \eta}(\hat{m}) {\stackrel{?}{=}}\hat{m}_B$. If this is the case, she opens the commitment $C_1$ (i.e., revealing $\hat{m}, \hat{m}_B$) to the service provider who verifies that the cleartext result matches the blinded information (i.e., by also checking if $ {B_{\nu, \eta}(\hat{m})\stackrel{?}{=}}\hat{m}_B$). At the end of the release phase, both parties are confident that the decrypted output is the expected result of the computation, while the service provider learns only the computation's result and nothing else about the user's data.

\section{Privacy and Security Analysis}\label{analysis}

\sysname protects the user's privacy by revealing only the output of the agreed computation on her data, and it protects the service provider's integrity by preventing any cheating or forgery from the user. Here, we present these two properties and their corresponding proofs. The used lemmas and propositions are presented in Appendix~\ref{app:lem}.

\subsection{Privacy}\label{analysis:priva}

\begin{prop}\label{prop:priva}
Consider a series of messages $\{msg_i\}$ certified by the data source with a digital signature scheme $\sigma(\cdot)$ that uses a cryptographic hash function $H(\cdot)$ with nonces. Assume that the parameters of the CKKS $(N,q,\chi_\text{enc},\chi_\text{key},\\\chi_\text{err})$ and BDOP $(\beta,k,n,q,N)$ schemes have been configured to ensure post-quantum security, that the circuit $\mathcal{C}$ is a valid (2,3)-decomposition, and that the cryptographic commitment $\text{Com}(\cdot)$ is hiding and binding. Then, our solution achieves privacy by yielding nothing more than the result $\hat{m}$ of the computation on the user's data $\{\bm{x}_i\}$.
\end{prop}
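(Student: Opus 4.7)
The plan is to give a simulation-based argument: I will construct a probabilistic polynomial-time simulator that, given only the inputs available to the service provider together with the agreed output $\hat{m}$, produces a transcript computationally indistinguishable from the real view of the honest-but-curious service provider. Privacy then follows because every bit the service provider sees could have been generated without access to $\{\bm{x}_i\}$ other than through $\hat{m}$. I would proceed by hybrids, replacing each component of the transcript one at a time with a simulated version and bounding the distinguishing advantage by the respective cryptographic hardness assumption.

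First I would handle the transfer-phase message $M_1=\{\bm{ct},\bm{C}_{\text{bdop}},\mathcal{P},H(msg),\sigma(H(msg))\}$. The ciphertext $\bm{ct}$ can be replaced by an encryption of $0$ by IND-CPA security of CKKS, which reduces to RLWE under the chosen post-quantum parameters $(N,q,\chi_\text{enc},\chi_\text{key},\chi_\text{err})$. The commitment $\bm{C}_{\text{bdop}}$ can be replaced by a commitment to $0$ by the hiding property of BDOP, which rests on the module-LWE assumption under $(\beta,k,n,q,N)$; here I would invoke the relaxed composite-modulus analysis of Section~\ref{archi:TP} to guarantee that enough small-norm invertible challenges exist. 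The proof $\mathcal{P}$ consists of a ZKB++ transcript and a BDOP bound proof: because $\mathcal{C}$ is a valid $(2,3)$-decomposition, the $2$-privacy property of ZKB++ in the QROM lets a simulator produce a transcript given only the public outputs $(\bm{ct},\bm{C}_{\text{bdop}},H(msg))$, and the bound proof is honest-verifier zero-knowledge via its $\Sigma$-protocol structure. Finally, $H(msg)$ is simulated by a uniformly random string: because $msg$ contains a fresh nonce, the hash behaves as a random oracle output that reveals nothing about $\bm{x}$, and $\sigma(H(msg))$ is then simulated by signing that random digest (or, if the signature is deterministic and requires the true message, by invoking the appropriate signature-secrecy property).

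For the release phase I would simulate the two-round exchange as follows. Before the opening of $C_0$, the simulator samples uniformly random $(\nu,\eta)$ and sends $(\bm{ct}_\psi, B_{\nu,\eta}(\bm{ct}_\psi), C_0)$; the binding and hiding properties of $\text{Com}(\cdot)$ ensure that $C_0$ is indistinguishable from the real one and that it cannot later be opened to a different $(\nu',\eta')$. The simulator, knowing only $\hat{m}$, computes $\hat{m}_B=\nu\hat{m}+\eta$ and produces $C_1=\text{Com}(\hat{m},\hat{m}_B)$ exactly as in the real execution. On opening, the revealed values are precisely $(\hat{m},\hat{m}_B,\nu,\eta)$, and $\hat{m}_B$ is a function of $\hat{m}$ and the service-provider-chosen randomness, so nothing beyond $\hat{m}$ is leaked. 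The admissibility check by the user guarantees that the protocol never releases information about computations $\psi(\cdot)$ outside the agreed policy.

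The main obstacle will be carefully composing the hybrids so that the reductions remain tight in the quantum setting: in particular, the ZKB++ zero-knowledge simulator must be invoked in the QROM while the RLWE/module-LWE reductions are quantum-polynomial, and the bound proof must be shown not to reveal the encryption noises despite being bound to the same $\bm{r}_c$ used inside $\mathcal{C}$. A secondary subtlety is arguing that the signature $\sigma(H(msg))$ does not act as a covert channel leaking $\bm{x}$; this is handled by the nonce-based indistinguishability of $H(msg)$ together with the standard security of the signature scheme, but it must be stated explicitly so that the simulator can forge, or query for, a signature on a random digest without distinguishing advantage.
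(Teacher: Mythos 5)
Your proposal is correct and follows essentially the same route as the paper: a simulation-based argument with one simulator for the transfer phase (indistinguishability of the ciphertext via RLWE/CKKS semantic security, hiding of the BDOP commitment, the ZKB++ zero-knowledge/2-privacy simulator in the QROM, honest-verifier zero-knowledge of the bound proof, and random-oracle treatment of the nonce-protected hash) and a second simulator for the release phase that uses the known blinding parameters $(\nu,\eta)$ to compute $\hat{m}_B$ from $\hat{m}$ alone, relying on the hiding and binding of $\text{Com}(\cdot)$ and the admissibility check on $\psi(\cdot)$. The only cosmetic differences are that you phrase the argument as a hybrid sequence, encrypt and commit to zero rather than to random values, and explicitly simulate the data source's signature, none of which changes the substance of the paper's proof.
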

\begin{proof}[Proof]
To prove the privacy of \syscomma, we construct an ideal simulator whose outputs are indistinguishable from the real outputs of \syscomma's transfer and release phases. 

\descr{Transfer Phase.} In the quantum random oracle model (QROM), consider an ideal-world simulator $\mathcal{S}_t$ and any corrupted probabilistic polynomial time (PPT) service provider (i.e., the verifier). Without loss of generality, we consider only one round of communication between the user and service provider (i.e., one set of challenges). The simulator $\mathcal{S}_t$ generates a public-private key pair $(\bm{pk}',\bm{sk}')$. Following the encryption protocol, $\mathcal{S}_t$ samples $r_0'\leftarrow\chi_{\text{enc}}$ and $e_0', e_1' \leftarrow \chi_{\text{err}}$ and computes the encryption of a random vector $\bm{m}'$ into $\bm{ct}'$. Similarly, it samples a commitment noise vector $\bm{r}'_c\leftarrow\mathcal{S}_\beta^{k}$ and commits $(r_0',e_0', e_1')$ into $\bm{C}'_\text{bdop}$. Using a random nonce, the simulator also hashes $H(\bm{m}'[0])$. Without loss of generality, this can be extended to all components of $\bm{m}'$. $\mathcal{S}_t$ then sends $\{\bm{ct}',\bm{C}'_\text{bdop},H(\bm{m}'[0])\}$ to the service provider. 
The view of the service provider in the real protocol comprises $\{\bm{ct},\bm{C}_\text{bdop},H(msg)\}$. By the semantic security of the underlying encryption scheme~\cite{cheon_homomorphic_2017}, the hiding property of the BDOP commitment scheme (see Lemma~\ref{lem:bdop4}), and the indistinguishability property of the hash function in the QROM, the simulated view is indistinguishable from the real view.

Following the proof of Lemma\;\ref{lem:bdop10} in \cite{baum_ecient_nodate}, for each iteration of the bound proof with challenge $\text{d}{\in} \{0,1\}$, the simulator $\mathcal{S}_t$ can randomly draw $z'$ and $\bm{r}_z'$ with small norm and set $\bm{t}{=}\text{BDOP}(z',\bm{r}_z'){-}\text{d}\bm{C}_\text{bdop}$ (see~\cite{baum_ecient_nodate}). The simulator then commits to $\bm{t}$ in the bound proof protocol. Both ideal and real distributions are indistinguishable by the hiding property of the auxiliary commitment. 

In parallel, following~\cite{giacomelli_zkboo_2016}, given $e \in \{1,2,3\}$, the simulator $\mathcal{S}_t$ sequentially

\begin{itemize}
    \item Evaluates the \textsc{Share} function on the vector $\bm{m}'$, the encryption noises $e_0'$, $e_1'$, and $r_0'$ and commitment noises $\bm{r}_c'$. We denote the result by $({\text{view}'}_1^{0},{\text{view}'}_2^{0},{\text{view}'}_3^{0})$ (See Appendix~\ref{app:zkb}).
    \item Samples random tapes $\bm{k}_e'$, $\bm{k}_{e+1}'$.
    \item Evaluates the arithmetic circuit according to:
    If gate $c$ is linear, it defines ${\text{view}'}_{e}^{c}$ and ${\text{view}'}_{e+1}^{c}$ using $\phi_e^c$ and $\phi_{e+1}^c$. If gate $c$ is a multiplication one, it samples uniformly at random ${\text{view}'}_{e+1}^{c}$ and uses $\phi_e^c$ to compute ${\text{view}'}_{e}^{c}$.
    \item Once all the gates are evaluated and the vectors of views $\textbf{View}_e'$ and $\textbf{View}_{e+1}'$ are defined (see App.~\ref{app:zkb}), the simulator computes the respective outputs $y'_e$ and $y'_{e+1}$. 
    \item Computes $y'_{e+2}=y-(y'_{e}+y'_{e+1})$.
    \item Computes $z'_e$ following step (ii) of the ZKB++ protocol using $\textbf{View}'_{e+1}$, $\bm{k}_e'$, $\bm{k}_{e+1}'$ (and optionally $x'_{e+2}$ depending on the challenge).
    \item Outputs ($z'_e$, $y'_{e+2}$).
\end{itemize}

The simulator $\mathcal{S}_t$ follows a protocol similar to the original ZKB++ protocol. The only difference is that for a multiplicative gate $c$, the simulated view value ${\text{view}'}_{e+1}^{c}$ is sampled uniformly at random, whereas the original view value $\text{view}_{e+1}^{c}$ is blinded by adding $R_i(c)-R_{i+1}(c)$, with $R_i(c)$ and $R_{i+1}(c)$ the outputs of a uniformly random function sampled using the tapes $\bm{k}_{e}$ and $\bm{k}_{e+1}$. Thus, the distribution of $\text{view}_{e+1}^{c}$ is uniform and ${\text{view}'}_{e+1}^{c}$ follows the same distribution in the simulation. Therefore, the ZKB++ simulator’s output has the same distribution as the original transcript ($z_e$, $y_{e+2}$)  the output of the simulator $\mathcal{S}_t$ is indistinguishable from the valid transcript to a corrupted verifier.
Following the ideal functionality of $\mathcal{S}_t$, the ideal view of the service provider (i.e., $\{\bm{ct}',\bm{C}'_\text{bdop},H(\bm{m}'[0]),P'\}$) is indistinguishable from the real view (i.e., $\{\bm{ct},\bm{C}_\text{bdop},H(msg),P\}$, with $P$ the real ZKB++ proof). Thus, the ideal and real outputs are indistinguishable for the corrupted PPT service provider proving the privacy-property of \syscomma's transfer phase.\qed

\descr{Release Phase.} We construct a second simulator $\mathcal{S}_r$ to prove that \syscomma's release protocol (Section~\ref{archi:RP}) reveals nothing more than the result $\hat{m}$ to a curious verifier. A different simulator is required, as the release phase is independent from the transfer phase. We consider that $\mathcal{S}_r$ knows the blinding function ahead of time (i.e., it knows $(\nu,\eta)$) for the real conversation leading to the service provider accepting $\hat{m}$. Upon reception of the first message $\{\bm{ct}_\psi, B_{\nu, \eta}(\bm{ct}_\psi), C_0, \psi(\cdot)\}$ such that $\text{Dec}_{\bm{sk}}(\bm{ct}_\psi)=\hat{m}$, $\mathcal{S}_r$ creates $\hat{m}_B$ using the blinding parameters. The simulator commits to $C_1'{=}\text{Com}(\hat{m},\hat{m}_B)$, which is indistinguishable from $C_1$ to the curious verifier according to the hiding property of the commitment scheme. After receiving an opening for 
$C_0$, the simulator opens $C_1'$ to $\hat{m}$ and $\hat{m}_B$, which sustain the verifier checks as defined in Section~\ref{archi:RP}. The binding property of the commitment scheme asserts that $(\nu,\eta)$ is used for the blinding. The aforementioned conversation between the prover and verifier is indistinguishable from the real conversation. By checking the function $\psi(\cdot)$, and as the service provider is honest-but-curious, the user is assured that the service provider evaluated $\psi(\cdot)$ and is not using her as a decryption oracle. If the user deems the function inadmissible, she aborts.
\end{proof}

\subsection{Integrity}\label{analysis:inte}
\begin{prop}\label{prop:inte}
Consider a series of messages $\{msg_i\}$ certified by the data source with a digital signature scheme $\sigma(\cdot)$ that uses a cryptographic hash function $H(\cdot)$ with nonces. Assume that the parameters of the CKKS $(N,q,\chi_\text{enc},\chi_\text{key},\\ \chi_\text{err})$ and BDOP $(\beta,k,n,q,N)$ schemes have been configured to ensure post-quantum security, that the ZKB++ protocol execution of $\mathcal{C}$ achieves soundness $\kappa$, that the blinding function $B_{\nu,\eta}$ is hiding, and that the cryptographic commitment $\text{Com}(\cdot)$ is hiding and binding. Then, our solution achieves integrity as defined in Section~\ref{model:objectives}, as it ensures with soundness $\kappa$ that the output $\hat{m}$ is the result of the computation on the user's data.
\end{prop}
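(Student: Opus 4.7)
The plan is to prove the contrapositive: if a malicious user can make the service provider accept an output $\hat{m}$ that differs from the correct evaluation of $\psi(\cdot)$ on data truly certified by the data source with probability greater than $\kappa$ up to negligible additive terms, then she breaks one of the underlying primitives -- ZKB++ soundness, the BDOP binding or bound-proof soundness, unforgeability of $\sigma(\cdot)$, collision resistance of $H(\cdot)$, or binding/hiding of $\text{Com}(\cdot)$. I would treat in turn the three phases where the user is active: transfer, computation, and release.

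For the transfer phase, I would invoke ZKB++ knowledge-soundness on the circuit $\mathcal{C}$: because $\mathcal{C}$ is a valid (2,3)-decomposition, any accepting transcript allows an extractor to recover a witness $(\bm{x}, nonce, r_0, e_0, e_1, \bm{r}_c)$ consistent with the declared outputs $(\bm{ct}, \bm{C}_{\text{bdop}}, H(msg))$, up to soundness error $\kappa$. Combined with the external signature check on $(H(msg), \sigma(H(msg)))$, unforgeability of $\sigma$ and collision resistance of $H$ force $msg = \{nonce, uid, \bm{x}\}$ to be a genuine payload issued by the data source; otherwise the user would have produced a signature forgery or a hash collision. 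The BDOP bound proof then certifies that the extracted noises $(r_0, e_0, e_1)$ have norm inside the CKKS decryption budget, so that $\text{Dec}_{\bm{sk}}(\bm{ct})$ recovers the authentic $\bm{x}$ up to the approximation tolerated by the scheme; the binding property of BDOP (Section~\ref{prelims:commit}) ties $\bm{C}_{\text{bdop}}$ uniquely to this noise vector, preventing a double-opening attack.

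The computation phase follows from CKKS correctness under bounded noise: since every validated ciphertext encrypts authentic data with small noise, any admissible polynomial $\psi(\cdot)$ within the prescribed multiplicative depth produces $\bm{ct}_\psi$ that decrypts to $\psi(\bm{x})$ up to a controlled approximation error. For the release phase, I would argue that if the user opens $C_1$ to $(\hat{m}^\star, \hat{m}_B^\star)$ with $\hat{m}^\star \neq \text{Dec}_{\bm{sk}}(\bm{ct}_\psi)$, then to pass the verifier check she needs $\hat{m}_B^\star = \nu \hat{m}^\star + \eta$. At the moment $C_1$ is formed, the hiding of $C_0$ keeps $(\nu,\eta)$ information-theoretically concealed, and since $\hat{m}_B = \nu \hat{m} + \eta$ with $\eta$ uniform in $\mathbb{Z}_q$ acts as a one-time pad, the user's view leaks nothing about $(\nu,\eta)$. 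Her success probability is therefore at most $1/(q-1)$, and the binding of $\text{Com}$ blocks her from equivocating once $C_0$ has been opened.

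The main obstacle I anticipate is the cumulative accounting: combining the ZKB++ soundness over the joint arithmetic-Boolean circuit, including the new bit-decomposition block, the BDOP bound-proof soundness over iterated binary challenges, the signature and hash advantages, and the release-phase term $1/(q-1)$, all into a single bound dominated by $\kappa$ under the post-quantum parameterization. A more subtle point is ensuring that the extracted bound on $(r_0, e_0, e_1)$, proved for the composite modulus construction extended in Section~\ref{archi:TP}, remains tight enough to guarantee correct CKKS decryption throughout the full multiplicative depth of $\psi$; this will rely on the standard CKKS noise-growth analysis applied to the extracted witness, and a careful argument that the relaxation of the invertibility condition on the BDOP challenge space does not weaken the binding property needed here.
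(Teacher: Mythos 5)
Your proposal is correct and follows essentially the same route as the paper's proof: a phase-by-phase argument using ZKB++ (3-special) soundness of the circuit $\mathcal{C}$, collision resistance of $H(\cdot)$ for the certified payload, BDOP bound-proof special soundness for the encryption noises, CKKS decryption correctness for the computation phase, and the hiding of $C_0$, one-time-pad blinding, and binding of $C_1$ for the release phase. The only differences are cosmetic: your contrapositive/reduction framing and explicit $1/(q-1)$ release-phase bound versus the paper's direct accounting of the cheating probability as at most $2^{-\kappa}$, and the paper's extra appeal to the user's rationality to rule out improperly sampled BDOP commitment randomness $\bm{r}_c$, which your norm-check/binding argument covers implicitly.
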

\begin{proof}[Proof]
Let us consider a cheating user with post-quantum capabilities as defined in Section~\ref{model:threat}. She wants to cheat the service provider in obtaining from the public function $\psi(\cdot)$ a result that is not consistent with the certified data. The public function evaluated by the service provider is $\psi(\cdot)$ and returns $\hat{m}$ on the series $\{msg_i\}$ of data signed by the data source with the signature scheme $\sigma(\cdot)$. We interchangeably denote by $\psi(\cdot)$ the public function in the plaintext and ciphertext domains.
By Lemma\;\ref{lem:ckks}, the ciphertext $\bm{ct}_{\psi}$ can be decrypted correctly using the secret key $\bm{sk}$.
As stated in~\cite{giacomelli_zkboo_2016} adapted to~\cite{chase_post-quantum_2017}, the binding property of the commitments used during the MPC-in-the-head guarantees that the proof $\mathcal{P}$ contains the information required to reconstruct $\textbf{View}_e$ and $\textbf{View}_{e+1}$. Given three accepting transcripts (i.e., one for each challenge), the verifier can traverse the decomposition of the circuit from the outputs to the inputs, check every gate and reconstruct the input. By surjectivity of the ZKB++ decomposition function, the verifier can reconstruct $x'\text{ s.t. } \Phi(x')=y$ proving the 3-special soundness property (see proof of Proposition~\ref{lem:zkboo} in~\cite{giacomelli_zkboo_2016}). The completeness property of the ZKCE evaluation follows directly from the construction of the (2,3)-decomposition of the circuit. Thus, from a correct execution of $\tau$ iterations of the protocol (parameterized by the security parameter $\kappa$), a user attempting to cheat the ZKB++ execution will get caught by the service provider with probability at least $1{-}2^{-\kappa}$.
Hence, a malicious but rational user can only cheat by tampering with the data before they are input to the circuit, i.e., the input messages or the encryption parameters. As the user is rational, she samples proper noise for the BDOP commitment; otherwise, she would lose either privacy or utility: not sampling noise uniformly from $\mathcal{S}_\beta^k$ would lead to a privacy leakage; conversely, sampling noises in $\mathcal{R}_q$ with norm bigger than $\beta$ or with a degree above the threshold defined by the scheme would lead to an improperly formatted commitment, and thus, a potential loss in utility, as the service provider would reject it. By the collision-resistance property of the hash function, it is computationally infeasible for the user to find a collision and thus a tampered message yielding the same hash. By property~\ref{lem:bdop10}, the bound proof is correct and offers special soundness: the service provider will detect a cheating user that samples malicious noises to distort the encryption, with probability at least $1-2^\kappa$. Note that in the case of an abort, the protocol is simply re-executed. Finally, during the release protocol, the integrity of the computation's output $\hat{m}$ is protected by the hiding property of commitment $C_0$, the hiding property of the blinding function (seen as a one-time-pad shift cipher in $\mathbb{Z}_q$ which achieves perfect secrecy of $\nu{\cdot}x$, i.e., it is impossible for the user to find $\nu$ and blind another result as $\hat{m}_B$), and the binding property of $C_1$~\cite{giacomelli_zkboo_2016}. Therefore, in \sysname users can only cheat with probability at most $2^{-\kappa}$.
\end{proof}

\section{Evaluation}\label{eval}

We evaluate \sysname on the three use cases discussed in Section~\ref{intro}, namely smart metering, disease susceptibility, and location-based activity tracking, using public real-world datasets. We detail the instantiation and parameterization of our proposed solution, then illustrate its overall performance per use case, in terms of both overhead and utility.
As previously mentioned, \sysname enables to offload the data and to conduct multiple operations on it. For simplicity, we present only one operation per dataset.

\begin{figure*}[t]
\centering
\begin{subfigure}[]{0.33\textwidth}
    \includegraphics[width=\textwidth, trim= 20 17 20 20, clip]{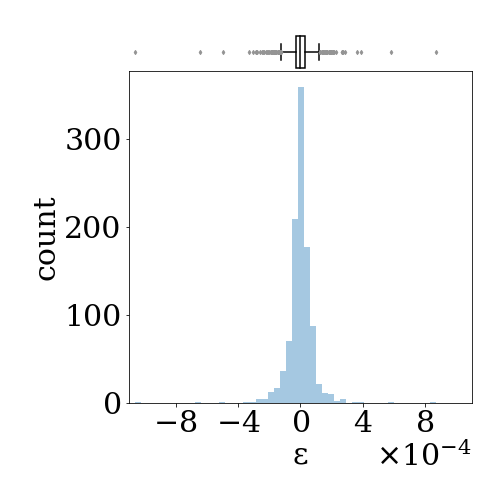}
    \vspace{-0.8cm}
    \caption{Smart Metering\\ \centering{(Addition)}}
    \label{fig:smhist}
\end{subfigure}
~
\begin{subfigure}[]{0.33\textwidth}
    \includegraphics[width=\textwidth, trim= 20 17 20 20, clip]{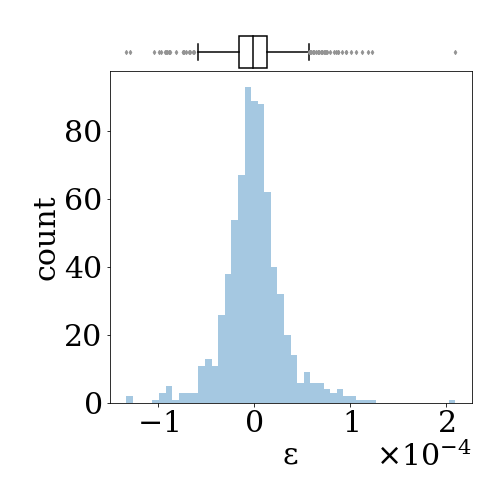}
    \vspace{-0.8cm}
    \caption{Disease Susceptibility \\ \centering{(Weighted Sum)}}
    \label{fig:genhist}
\end{subfigure}
~
\begin{subfigure}[]{0.33\textwidth}
    \includegraphics[width=\textwidth, trim= 20 17 20 20, clip]{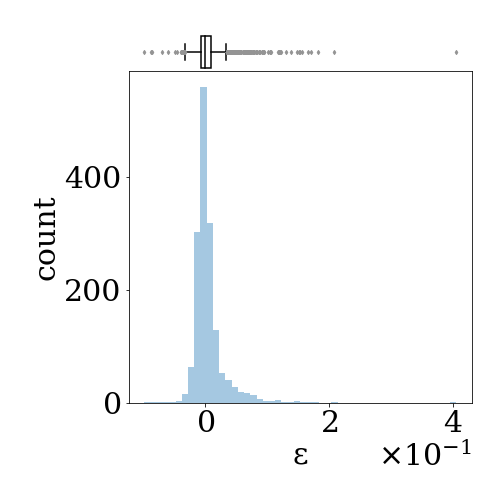}
    \vspace{-0.8cm}
    \caption{Location-based Activity-Tracking \\ \centering{(Euclidean Distance)}}
    \label{fig:runhist}
\end{subfigure}
\vspace{-0.35cm}
\caption{Histogram and boxplot of the relative error for the three use cases. 
The boxes shown on top of each figure represent the interquartile range (IQ) and the median, the whiskers are quartiles $\pm 1.5\cdot$IQ, and the dots are outliers.}
\label{fig:relative-error}

\end{figure*}

\subsection{Implementation Details}\label{eval:implem}
We detail how the various blocks of our construction are implemented and configured.

\descr{Implementation.} We implement the various blocks of \sysname on top of different libraries. The homomorphic computations are implemented using the Lattigo library~\cite{lattigo}. The commitment and encryption blocks of the circuit are implemented using CKKS from~\cite{snucrypto} by employing a ZKB++ approach. The circuit's Boolean part (i.e., the hash and conversion blocks) is implemented on top of the SHA-256 MPC-in-the-head circuit of~\cite{giacomelli_zkboo_2016}. All the experiments are executed on a modest Manjaro~4.19 virtual machine with an i5-8279U processor running at 2,4~GHz with 8\,GB~RAM.

\descr{CKKS \& BDOP.} For CKKS, we use a Gaussian noise distribution of standard deviation $3.2$, ternary keys with i.i.d. coefficients in $\{0,\pm 1\}^N$, and we choose $q$ and $N$ depending on the computation and precision required for each use case, such that the achieved bit security is always at least 128 bits. Each ciphertext encrypts a vector $\bm{d}$ consisting of the data points $\{x_i\}$ in the series of messages $\{msg_i\}$. Our three use cases need only computations over real numbers, hence we extend the real vector to a complex vector with null imaginary part. Similarly, the BDOP parameters for the commitment to the encryption noises are use case dependent. In principle, we choose the smallest parameters $n$ and $k$ to ensure a $128$-bit security ($n{=}1$, $k{=}5$) and $\beta$ is chosen according to $N$ and $q$.

\descr{ZKB++.} We set the security parameter $\kappa$ to 128, which corresponds to 219 iterations of the ZKB++ protocol. We also consider seeds of size $128$ bits and a commitment size of $|c|{=}256$ bits using SHA-256 as in~\cite{chase_post-quantum_2017}. Overall, considering the full circuit, the proof size per ZKB++ protocol iteration $|p_i|$ is calculated as
\begingroup
\begin{multline*}
|p_i| {=} |c| + 2\kappa + \log_23 +  \frac{2}{3}(|\bm{d}| + |\text{Com}| + |\text{Enc}| + |\bm{t}|) + b_{\text{hash}} + b_{\text{A2B}},
\end{multline*}
\endgroup
with $|\bm{d}|$ being the bit size of the secret inputs, $|\text{Com}|$ the bit size of the commitment parameters, $|\text{Enc}|$ the bit size of the encryption parameters, $b_{\text{hash}}$ the number of multiplicative gates in the SHA-256 circuit, $b_{\text{A2B}}$ the number of AND gates in the conversion block, and $|\bm{t}|$ the bit size of the additional information required to reconstruct the data source's message but not needed for the service provider's computation (e.g., user identifier, nonce, timestamps, etc.). We note that according to the NIST specification~\cite{fips_180-4}, SHA-256 operates by hashing data blocks of 447 bits. If the size of the user's input data exceeds this, it is split into chunks on which the SHA-256 digest is evaluated iteratively, taking as initial state the output of the previous chunk (see~\cite{fips_180-4}). We adapt the SHA-256 Boolean circuit described in~\cite{giacomelli_zkboo_2016}, which uses 22,272 multiplication gates per hash block, to the setting of ZKB++~\cite{chase_post-quantum_2017}. The Boolean part of the circuit is focused on the $|\bm{x}|$ least significant bits of the arithmetic sharing of $\bm{d}$ which is concatenated locally with a Boolean secret sharing of the additional information (nonce, uid, etc.). In our implementation, the user needs 182\,ms to run the Boolean part of the circuit associated with generating a hash from a 32-bits shared input $\bm{x}$. The verifier needs 73\,ms to verify this part of the circuit.

\descr{Release Protocol.} We use SHA-256 as a commitment scheme $\text{Com}(\cdot)$ and a linear blinding operation $B_{\nu,\eta}(\cdot)$ in $\mathbb{Z}_q$.

\descr{Evaluation Metrics.} We evaluate the performance of our solution on different use cases with varying complexity in terms of computation (i.e., execution time) and communication (i.e., proof size) overhead. The proof $\mathcal{P}$ is detailed as the proof for the ZKCE, as well as the BDOP bound proof. We also report the optimal ZKCE proof size per datapoint: i.e., if the ciphertexts are fully packed. To cover a wide range of applications we evaluate various types of operations on the protected data such as additions, weighted sums, as well as a polynomial approximation of the non-linear Euclidean distance computation. As CKKS enables approximate arithmetic, we measure the accuracy of our solution by using the relative error. Given the true output of a computation $m$ and the (approximate) value $\hat{m}$ computed with \syscomma, the relative error $\epsilon$ is defined as $\epsilon {=}\frac{m - \hat{m}}{m}$.

\subsection{Smart Metering}\label{eval:SM}
We consider a smart meter that monitors the household's electricity consumption and signs data points containing a fresh nonce, the household identifier, the timestamp, and its consumption. The energy authority is interested in estimating the total household consumption (i.e., the sum over the consumption data points) over a specified time period $I$ (e.g., a month or a year) for billing purposes\vspace{-0.35cm}

\begingroup
\setlength\abovedisplayskip{0em}
\setlength\belowdisplayskip{0em}
\[\small{m_\text{sm} = \sum_{i\in I} \bm{d}[i]},\]
\endgroup%
where $\bm{d}$ is the vector of the household consumption per half hour.
As our solution offloads the encrypted data to the service provider, additional computations, e.g., statistics about the household's consumption, are possible without requiring a new proof; this improves flexibility for the service provider.

\descr{Dataset \& Experiment Setup.} We use the publicly available and pre-processed UKPN dataset~\cite{ukpn_smartmeter_nodate} that contains the per half hour (phh) consumption of thousands of households in London between November 2011 and February 2014. Each entry in the dataset comprises a household identifier, a timestamp, and its consumption phh. 
For our experiment, we randomly sample a subset of 1,035 households and estimate their total energy consumption over the time span of the dataset with our solution. We set the parameters as follows: We use a modulus of $\log q {=} 45$ bits and a precision of 25 bits, which imposes a maximum of $2^{10}$ slots for the input vectors ($\log N {=} 11$). Hence, each household's consumption phh is encoded with multiple vectors $\bm{d}_k$ to cover the time span of the dataset. To evaluate its proof size, we assume that the messages obtained from the smart meter include a 16-bit household id, a 128-bit nonce, a 32-bit timestamp, and a 16-bit consumption entry.

\descr{Results.} The average time for encryption of a vector of 1,024 datapoints at the user side is $t_\text{enc}{=}70$\,ms, and the decryption requires $t_\text{dec}{=}0.7$\,ms. The mean time for the energy computation at the service provider side is $t_\text{comp}{=}130$\,ms. To generate the proof for one ciphertext, containing 1,024\,phh measurements (i.e., 21 days worth of data), the user requires $t_\text{prove}{=}3.3$\,min, and its verification at the service provider's side is executed in $t_\text{ver}{=}1.4$\,min. 
The estimated ZKCE proof size for each ciphertext of 1,024 elements is $643.4$\,MB, whereas the bound proof is $7.05$\,MB. For fully packed ciphertexts (1,024 datapoints), \syscomma's proof generation and verification respectively take $195$~ms and $80$~ms per datapoint, with a communication of $628$~KB.
Finally, Figure~\ref{fig:smhist} displays the accuracy results for the smart metering use case. We observe that our solution achieves an average absolute relative error of $5.1{\cdot}10^{-5}$ with a standard deviation of $7.2{\cdot}10^{-5}$, i.e., it provides very good accuracy for energy consumption computations. We remark that more than $75$\% of the households have an error less than $\pm 2.5{\cdot}10^{-4}$.

\begin{table}
\footnotesize
\centering
\setlength\tabcolsep{2pt}
\renewcommand{\arraystretch}{0.5}
\caption{Evaluation summary of \sysname (reported timings are the averages over 50 runs $\pm$ their standard deviation).}
\hspace*{-1.5cm}
\begin{tabular}{ L{2.5cm} C{1.7cm} C{1.7cm} C{1.6cm} C{1.5cm} C{1.1cm} C{1.1cm} C{1.1cm} C{1.1cm}  } 
\toprule
 Use Case & Computation & Mean Absolute Relative Error & \normalsize{$t_\text{enc}$} \footnotesize{(ms)} & \normalsize{$t_\text{comp}$} \footnotesize{(ms)}& \normalsize{$t_\text{dec}$} \footnotesize{(ms)}& Proof Size~(MB) & \normalsize{$t_\text{prove}$} \footnotesize{(s)}& \normalsize{$t_\text{ver}$} \footnotesize{(s)}\\ 
\midrule
\textbf{Smart Metering} & Sum & $5.1\cdot10^{-5}$ & $70\pm 10$ & $130\pm30$ & $0.7\pm0.3$ & $650.5$ & $200\pm10$ & $82\pm5$\\ 
\midrule
\textbf{Disease Susceptibility} & Weighted Sum &  $2.2\cdot 10^{-5}$ & $60\pm10$ &$22\pm5$ & $2.7\pm0.8$ & $53.9$ & $26\pm4$ & $13\pm2$\\ 
\midrule
\textbf{Location-Based Activity Tracking} & Euclidean Distance & $1.5\cdot10^{-2}$ & $980\pm70$ & $180\pm30$ & $7\pm2$ & $1,603$ & $470\pm40$ & $210\pm10$\\
\bottomrule
\end{tabular}
\label{table:perf}

\end{table}

\vspace{-0.32cm}
\subsection{Disease Susceptibility}\label{eval:DS}
We assume a medical center that sequences a patient's genome and certifies batches of single nucleotide polymorphisms (SNPs) that are associated with a particular disease $\partial$. A privacy conscious direct-to-consumer service is interested in estimating the user's susceptibility to that disease by calculating the following normalized weighted sum
\begingroup
\setlength\abovedisplayskip{0.1cm}
\setlength\belowdisplayskip{0cm}
\[\small{m_\partial  = \sum_{i \in S_\partial} \omega_i \cdot \bm{d}[i]},\]
\endgroup
where $S_\partial$ is the set of SNPs associated with $\partial$ and $\omega_i$ are their corresponding weights. The vector $\bm{d}$ comprises of values in $\{0,1,2\}$ indicating the presence of a SNP in 0, 1, or both chromosomes, which can be represented by two bits. This use case illustrates the need for flexibility in the service provider's computations, since it may be required to evaluate several diseases on the same input data at different times. Moreover, it accentuates the need for resistance against quantum adversaries, since genomic data is both immutable and highly sensitive over generations.

\descr{Dataset \& Experiment Setup.} We employ the 1,000 Genomes public dataset \cite{noauthor_1000_nodate}, that contains the genomic sequences of a few thousands of individuals from various populations. We randomly sample 145 individuals and extract 869 SNPs related to five diseases: Alzheimer's, bipolar disorder, breast cancer, type-2 diabetes, and Schizophrenia. We obtain the weight of a SNP with respect to those diseases from the GWAS Catalog~\cite{gwas}. Then, for every individual, we estimate their susceptibility to each disease. For this use case, we use a precision $\log p {=} 25$, a modulus of $\log q {=} 56$ consumed over two levels and a polynomial degree of $\log N {=} 12$. The input vector $\bm{d}$ (consisting of $2^{11}$ slots) is an ordered vector of integers containing the SNP values, coded on two bits, associated with the diseases. One vector is sufficient for the considered diseases. To estimate the proof size, we assume that the message signed by the data source contains a 16-bit user identifier, a 128-bit nonce, and the whole block of SNPs.

\descr{Results.} The average encryption time for up to 2,048 SNPs at the user side is $t_\text{enc}{=}60$\,ms, and the decryption is $t_\text{dec}{=}2.7$\,ms. The computation time of the disease susceptibility at the service provider is $t_\text{comp}{=}22$\,ms. The user needs $t_\text{prove}{=}26$\,s to generate the proof for the arithmetic part of the circuit, and the service provider verifies it in $t_\text{ver}{=}13$\,s. The estimated proof size for the ZKCE is $36.6$\,MB, whereas the bound proof is $17.3$\,MB. Figure~\ref{fig:genhist} shows our construction's accuracy for disease susceptibility computations by plotting the distribution of the relative error. We remark that the mean absolute relative error for such computations is appreciably low: $2.2{\cdot} 10^{-5}$ on average with a standard deviation of $2.3{\cdot}10^{-5}$. Moreover, more that $75$\% of the evaluated records have an absolute error inside the range $\pm0.7{\cdot}10^{-4}$.

\subsection{Location-Based Activity Tracking}\label{eval:LT}
We assume that a user is running with a wearable device that retrieves her location points during the activity from a data source, e.g., a cellular network. The service provider, e.g., an online fitness social network, seeks to estimate the total distance that the user ran during her activity $I$:
\begingroup
\setlength\abovedisplayskip{0em}
\setlength\belowdisplayskip{0em}
\[\small{m_\text{run} {=} \sum_{i \in I} \sqrt{(\bm{d}[i{+}1]{-}\bm{d}[i])^2 {+} (\bm{d}[\frac{N}{4}{+}i{+}1]{-}\bm{d}[\frac{N}{4}{+}i])^2}},\]
\endgroup
with $\bm{d}$ the vector of UTM (Universal Transverse Mercator) inputs packing Eastings in the first half of the vector and Northings in the second. Given that Euclidean distance computations require the evaluation of a non-linear square root function, we consider its least-squares approximation by a degree seven polynomial on a Legendre polynomial base.

\descr{Dataset \& Experiment Setup.} We run our experiment on a public dataset from Garmin Connect~\cite{international_garmin_nodate}. This dataset contains GPS traces of thousands of users engaging in various activities such as walking, running, and cycling. We randomly sample 
2,000 running traces
and we discard traces with less than 15 points and more than 2,000 points. Our initial dataset analysis shows that the traces are very \textit{noisy}: we identified unrealistic distances between consecutive points, timestamps and locations. We use GPSBabel~\cite{gpsbabel}, an open-source software, to interpolate the running traces such that the following criteria are met: (a) the maximum speed of a runner is less than $10$\,m/s, (b) the maximum distance between consecutive points is less than $30$\,m, and (c) the time delta between two points is less than $3$\,s, which 
are realistic for running activities. We remove traces whose time sampling was improperly executed by the data source (difference more than $10$\,s, standard deviation more than $5$, or a zero inter-quartile at $75$\%), as well as traces with unacceptable idleness\footnote{Idleness of a trace is a situation where the interquartile at $25$\% of the instant speed is less than $0.3$\,m/s and the covered distance is less than $15$\;m.}, and we convert the remaining GPS traces to UTM to obtain the Northings and Eastings geographic coordinates. Overall, we obtain a dataset of 1,608 traces (80\% of the initial 2K running trace dataset) which on average contain 1,124 datapoints and we estimate their total distance with \syscomma.

Considering the polynomial approximation required for the square root function, we set the size of the polynomial ring $N{=}2^{13}$ and a modulus $\log q {=} 184$. To calculate the proof sizes, we assume that the messages obtained from the data source contain a 16-bit user identifier, a 128-bit nonce, a 32-bit timestamp, and 24-bit Easting/Northing coordinates.

\descr{Results.} The encryption and decryption overhead for fully packed ciphertexts of up to 2,048 points at the user side is $t_\text{enc}{=}980$\,ms and $t_\text{dec}{=}7$\,ms, respectively, and the Euclidean distance computation at the service provider requires $t_\text{comp}{=}180$\,ms. For 2,048 datapoints, the user generates the proof for the arithmetic part of the circuit in $t_\text{prove}{=}7.9$\,min, and the service provider verifies it in $t_\text{ver}{=}3.4$\,min. Considering that each message signed by the data source is 96-bits, the proof size per trace for the ZKCE is $1,499.2$\,MB, and the bound proof is $103.7$\,MB. For our dataset, the average proof size is $922.1$~MB considering the mean number of points in the traces. In Section~\ref{eval:optz}, we will show how to reduce this proof size. With fully packed ciphertexts, \syscomma's proof generation requires $230$~ms per datapoint and $100$~ms for its verification, at a communication cost of $732$~KB.
Finally, Figure~\ref{fig:runhist} plots the relative error that we achieve for Euclidean distance computations. In particular, the average absolute relative error is $1.5{\cdot}10^{-2}$ with a standard deviation of $2.3{\cdot}10^{-2}$. In Figure~\ref{fig:runhist}, we see that more than $75$\% of the evaluated traces have an absolute error between $\pm 0.04$. We observe that the polynomial approximation of the square-root introduces errors higher than the other use cases. An improved accuracy can be achieved by increasing the polynomial degree, but this would require to increase upfront the encryption parameters ($N$, $L$, $q$) introducing additional communication and computation overhead to \syscomma. The wider spread of the relative error is due to the variance of the datapoints. Indeed, our analysis shows that gait, time sampling, and skewness of the speed distribution are among the factors that influence the overall relative error of the computations.

\subsection{Reducing the Communication Overhead}\label{eval:optz}

Table~\ref{table:perf} summarizes \syscomma's overhead for three use cases: smart metering, disease susceptibility, and location-based activity tracking. We observe that it introduces acceptable computational overhead at the user and service provider sides and that it achieves average absolute relative error between $2.2{\cdot}10^{-5}$ and $0.015$ for the desired computations. We remark however that our construction uses post-quantum secure lattice-based cryptographic primitives, such as encryption and commitment, and the MPC-in-the-head approach, to ensure the integrity of the user's data transfer. These come at the price of an increased communication (i.e., proof size). Therefore, we propose several improvements that one could employ to reduce this overhead and illustrate them in Figure~\ref{fig:optim} for the smart metering and location-based activity tracking use cases.

\descr{Random Integrity Checks (RIC).} A first optimization is to reduce the number of data points whose integrity is checked by the service provider. This introduces a trade-off between \syscomma's security level and its communication overhead. In particular, a service provider can decide to check only a subset of the input data hashes in the data verification phase, as we assume \textit{malicious but rational} users (Section~\ref{model:threat}) who will not cheat if there is a significant probability of getting caught. This is achieved through a sigma-protocol, that can be made non-interactive with the Fiat-Shamir heuristic: The user sends the ciphertext that encrypts all the datapoints (this can be seen as a commitment). Then, the service provider challenges a subset of datapoints to be hashed in the verification circuit.
Such a strategy enables a service provider to tune the solution depending on the level of confidence it has in the user. In Figure~\ref{fig:optim} we observe how the proof size decreases as the service provider checks fewer data blocks. For instance, if the service provider checks $20$\% of the data blocks in the verification phase (RIC-20\%), the proof size for location-based activity tracking drops from $1,499.2$\,MB to $497$\,MB (i.e., $243$~KB/datapoint), whereas for smart metering it decreases from $643.4$\,MB to $142.2$\,MB (i.e., $139$~KB/datapoint). This yields a reduction of more than $66$\% in the total ZKCE communication overhead. Computation times to generate and verify the proofs are also more than halved.

\begin{figure}
    \centering
    \includegraphics[width=.7\textwidth]{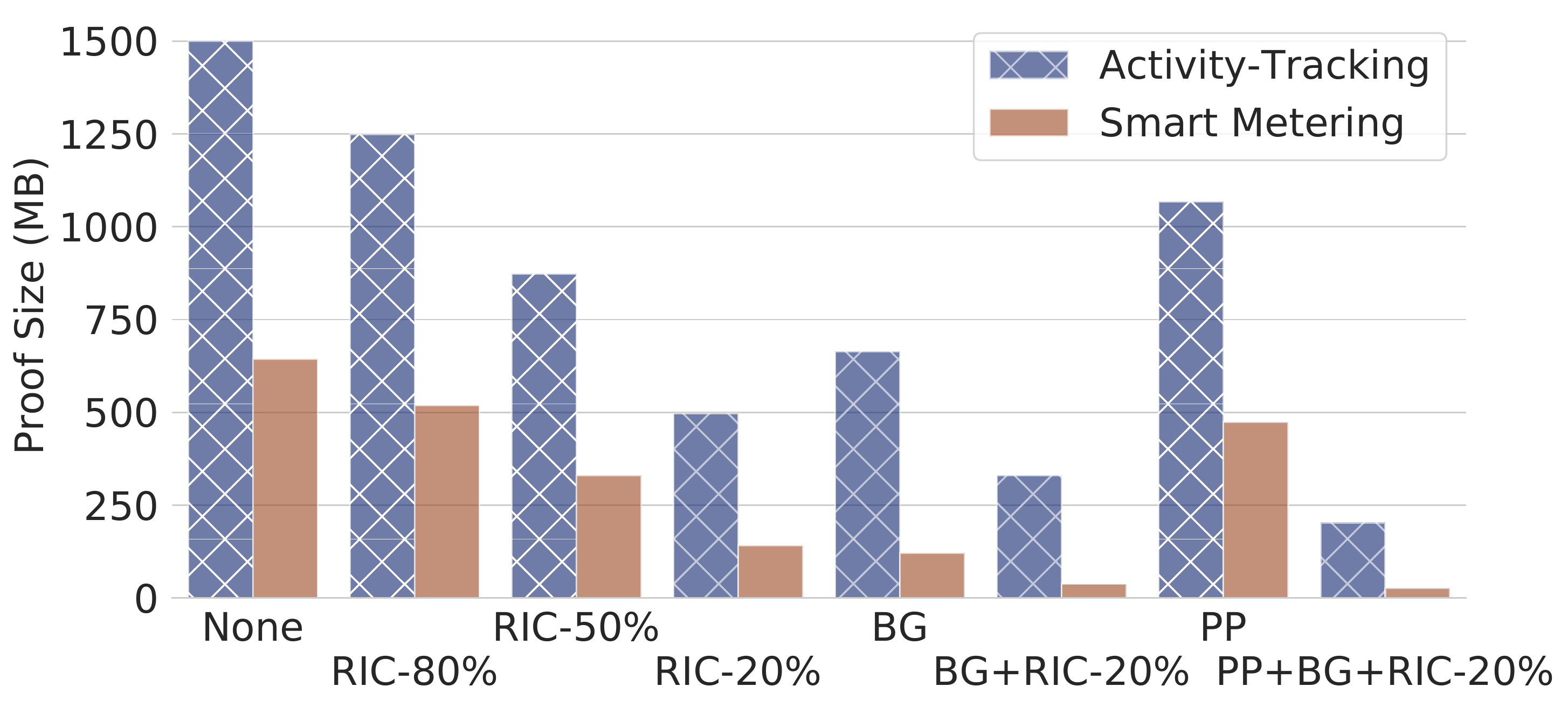}
    
    \caption{ZKCE proof size (MB) for fully packed ciphertexts and various optimizations.}
    \label{fig:optim}
    
\end{figure}

\descr{Batching (BG).} Another improvement is to modify the way data sources certify the users' data points. So far, in the smart metering and location-based activity tracking use cases, we assumed that data sources hash and sign every data point generated by the user. However, another strategy is to hash batches of data points in a single signed message. This modification is purely operational as it does not require additional software or hardware deployment. We set the batch size depending on the use case -- i.e., considering the additional information of each message before signature -- such that the overall batch can fit on a single SHA-256 input block of 447\,bits. Figure~\ref{fig:optim} shows a reduction of more than $50\%$ in proof size for the two use cases when batching (BG) is employed compared to the non-optimized solution. Batching can also be combined with RIC-20\% (BG+RIC-20\% in Figure~\ref{fig:optim}): For smart metering, the ZKCE proof size is further reduced to $38.1$\,MB (i.e., $37.2$~KB/datapoint), whereas for the location-based activity tracking the proof size drops to $329.7$\,MB (i.e., $161$~KB/datapoint). For activity-tracking, the $t_{\text{prove}}$ is reduced to $2.1$~min and $t_{\text{ver}}$ to $1.1$~min ($61$ and $32$~ms per datapoint, resp.). For smart metering, $t_{\text{prove}}$ is reduced to $20$~s and $t_{\text{ver}}$  to $9.3$~s ($20$ and $9$~ms per datapoint, resp.).

\descr{ZKCE Pre-processing (PP).} Finally, one can employ a ZKCE pre-processing model, such as that presented by Katz \etal~\cite{katz_improved_2018}. The pre-processing model considers that the user executes \textit{offline} a series of circuit evaluations on committed values. The service provider challenges a subset $\mathcal{M}$ of those evaluations and checks their integrity, and the remaining $\mathcal{\tau}$ ones are used in an \textit{online} phase along with the committed values. The rest of the protocol is similar to ZKB++. 
The proof size per iteration is reduced to:
\begingroup
\setlength\abovedisplayskip{0em}
\setlength\belowdisplayskip{-0.5cm}
\begin{small}
\begin{multline*}
|p_i| {=} 2\kappa + \mathcal{\tau} \log_2 \frac{\mathcal{M}}{\mathcal{\tau}} 3\kappa +\\[-0.3em]  \mathcal{\tau}(\kappa\log_2 3 + 2\kappa + (|\bm{d}| + |\text{Com}| + |\text{Enc}| + |\bm{t}|) + 2(b_{\text{hash}} + b_{\text{A2B}})).
\end{multline*}
\end{small}
\endgroup
Regarding our three players setting, a 128-bit security level requires $\mathcal{M}{=}300$ and $\mathcal{\tau}{=}81$, yielding a significant reduction of $25\%$ on the proof size (see~\cite{katz_improved_2018} for the computation details) compared to the non-optimized approach. 
Pre-processing, batching, and RIC can also be applied together to obtain smaller proofs (see PP+BG+RIC-20\% in Figure~\ref{fig:optim}): For smart metering, the ZKCE proof is reduced to $26.8$\,MB. Similarly, for location-based activity tracking, the ZKCE proof becomes $203.0$\,MB. This yields optimal ZKCE proof size per datapoint of $26.2$~KB and $99.1$~KB for smart metering and activity-tracking, respectively. Finally, we remark that according to Katz \etal~\cite{katz_improved_2018}, a trade-off between proof size and prover's computations could be achieved by increasing the number of players involved in the MPC-in-the-head protocol. However, such an improvement would require additional changes in \syscomma, e.g., the conversion block that interfaces the arithmetic and Boolean parts of the circuit should be adapted for a larger number of players.

\subsection{Comparison with ADSNARK} 
A fair comparison with~\cite{backes_adsnark:_2015} is not trivial to achieve, as our solution provides post-quantum security and overcomes the constraint of a trusted setup. Nonetheless, here we provide hints of their qualitative and quantitative differences. In particular, ADSNARK considers a smart metering use case that requires a non-linear cumulative function for the billing analysis over a month of data. We consider a similar non-linear pricing function evaluated by a degree-two polynomial, and we evaluate \sysname on the UKPN dataset for 400 households, with $N{=}12$ and $\log q {=} 106$. The median accuracy of our solution is higher than $99\%$. In terms of proof size, our construction yields $889.2$~MB (verifying all phh measurements for a month), whereas the overhead induced by ADSNARK is $71$~MB. However, we remark that the latter requires a new proof to be generated and exchanged every time a different computation is needed. In our solution, this cost is incurred only once; any subsequent operations can be computed locally by the service provider on the verified data. 
Additionally, ADSNARK accounts for only a ``\textit{theoretical estimate}'' of the complexity of the signature circuit (with only 1K multiplicative gates for signature verification) and, if we were to evaluate our solution with this circuit, the proof size would be only $104.2$\,MB. Thus, our analysis shows that our construction offers comparable results to the state-of-the-art and provides stronger security guarantees.

\section{Discussion}\label{discussion}
In this section, we present some interesting considerations that could influence the deployment of our solution.

\descr{Signature Scheme.} As discussed in Section~\ref{archi:CP}, \sysname is agnostic of the digital signature and it is compatible with any scheme that uses the  SHA-256 hash function. We employ SHA-256 as it is widely deployed in current infrastructures, adopted by various signature schemes (e.g., the recent post-quantum SPHINCS~\cite{BernsteinSPHINCS19} or the standard ECDSA schemes), and it is a benchmark for the evaluation of ZKBoo~\cite{giacomelli_zkboo_2016} and~ZKB++~\cite{chase_post-quantum_2017}.
This flexibility enables \sysname to be compliant with currently deployed signature schemes that might not be quantum resistant (e.g., ECDSA) at the cost of \syscomma's post-quantum integrity property.
Working with other hash functions (e.g., SHA-3 that is employed in~\cite{chase_post-quantum_2017}) is possible, with modifications to \syscomma's circuit.

\descr{Integrity Attacks.} \sysname copes with malicious users that might attempt to modify their data or the computed result to their benefit. However, some use cases require accounting for additional threats. For example, for smart metering, users might purposefully \textit{fail} to report some data (i.e., misreport) to reduce their billing costs. Similarly, in location-based activity-tracking, users might re-use pieces of data certified by the data source to claim higher performance and increase their benefits (i.e., double report). Such attacks can be thwarted by system level decisions; e.g., data sources can generate data points at fixed time-intervals known to service providers. Message timestamps can be encrypted along with the data points, so that service providers can verify their properties (e.g., their order or their range). As those attacks are application specific, we consider them out of the scope of this work.


\descr{Security of the ZKCE.} Dinur and Nadler~\cite{dinur_multi-target_2019} unveil a vulnerability of ZKCE systems, such as ZKB++, to multi-target attacks on the pseudo-random number generators. However, as stated by the authors, these attacks require a very large number of protocol executions (more than $2^{57}$) and thus are impractical and out of scope for our construction. The authors also argue that the use of appropriate salting in the pseudo-random number generation renders the attack very hard to succeed.
Additionally, Seker \etal propose a side-channel attack for MPC-in-the-head systems~\cite{seker2020sni}. However, we consider this attack out of scope of our model. 

\descr{Multiple Users.} \sysname trivially allows the service provider to compute aggregate statistics on data from multiple users by interacting separately with each of them and combining the results. Nonetheless, such a functionality can also be achieved by incorporating CKKS extensions to multiple parties. For instance, the multi-key scheme by Chen \etal~\cite{chen2019efficient} allows computations on ciphertexts generated by multiple users with their own keys. Alternatively, in the multiparty scheme by Mouchet \etal~\cite{mouchet2020multiparty} users generate a common public key for which the corresponding private key is secret shared among them. In both cases, the service provider computes on the users' ciphertexts and interacts with all of them to decrypt the result. 

\descr{Usability.} Even though \sysname introduces non-negligible communication and computation overhead, it remains acceptable for modern systems. The independent iterations of the ZKCE make the proof generation highly paralellizable and require much less memory than the full proof size (experimentally, as little as 2 GB of RAM). \sysname has also the advantage of being an offline system that only requires interaction in the release protocol: e.g., the transfer phase can be executed when the user is idle. Additionally, recent communication systems such as fiber optic internet or 5G offer high throughput links: With a 80\,Mb/s link, the proof for three weeks worth of smart metering data would only require about a minute to be transferred. For the activity tracking, \sysname can be executed when the user plugs her wearable device to a computer and transfer the data while recharging it.

\section{Related Work}\label{related}

Although homomorphic encryption is a solution receiving much traction to protect privacy in various fields, such as machine learning~\cite{graepel_ml_2013, juvekar_gazelle:_2018,shafagh_secure_2017} and medical research~\cite{kim_logistic_2018,wang_healer:_2016}, it only addresses the tension between privacy and utility, and it does not account for the authenticity of the encrypted data nor the correctness of the computation of its encryption (i.e., integrity). Verifiable encryption (VE) enables us to efficiently prove properties on encrypted data. Although VE solutions have been widely explored in the general case, notably by Camenish \etal~\cite{camenisch_verifiable_2000,camenisch_practical_2003}, they are still under investigation for lattice-based cryptographic systems that provide post-quantum security. Lyubashevsky and Neven~\cite{lyubashevsky_one-shot_2017} propose a \textit{one-shot} verifiable encryption for short solutions to linear relations, i.e., a single run of their protocol convinces a verifier that the plaintext satisfies the relation. Recent improvements, e.g.,~\cite{baum_more_2018} and~\cite{esgin_lattice-based_2019}, expand lattice-based VE to non-linear polynomial relations. Although VE can be used for proofs of correct encryption, it does not address data authenticity, which is ensured by cryptographic techniques, such as hash functions, that are more complex than polynomial relations.

To this end, homomorphic signatures~\cite{boneh_homomorphic_2011,catalano_practical_2013,catalano_security_2018,gorbunov_leveled_2015} and homomorphic authenticators~\cite{ahn_computing_2015,cheon_multi-key_2016,gennaro_fully_2013,matsui_context_2019} enable privacy-preserving computations on authenticated data. In particular, such schemes produce a signature of the plaintext result of homomorphic computations without deciphering. In this setting, a data owner provides a signature to some protected data and sends it to a server for processing. The server generates a new valid signature for the result of the homomorphic computation, which yields nothing more than the message it is signing. Some works, e.g.,~\cite{ahn_computing_2015}, improve this area with constructions offering homomorphic signatures that cope with low-degree polynomial operations. Homomorphic authenticators could be a solution to the problem under investigation, but they (a) require data sources to employ non-widely-supported homomorphic signature schemes, thus violating the minimal infrastructure modification requirement of our model, and (b) do not support data offloading at the service provider to amortize communication and storage costs.

Verifiable computation (VC)~\cite{fiore_efficiently_2014,gennaro_non-interactive_2010,lai_verifiable_2014} typically applies to cases where a computationally \textit{weak} user transfers her encrypted data to a cloud provider that computes on it, and its objective is to ensure the correctness and trustworthiness of the result. As such, VC protects only the integrity of the cloud computations and not the authenticity of the user's provided data. Such techniques are orthogonal to \sysname and could be employed to enable users to verify the computations performed by service providers, if the latter are considered malicious (and not honest-but-curious as in our Threat Model -- see Section~\ref{model:threat}).

Other VC techniques, known as zero-knowledge arguments, enable us to prove general statements about user private inputs. Pinocchio\,\cite{parno2013pinocchio} and subsequent works on succinct non-interactive arguments of knowledge (SNARKS), e.g.,~\cite{costello_geppetto:_2015}, build on Quadratic Arithmetic Programs \cite{gennaro2013quadratic} and bilinear maps to provide efficient proofs with small verification complexity. 
Backes \etal~\cite{backes_adsnark:_2015} extend SNARKS to the case of certified data (ADSNARK) and apply them to the three party model considered in this work. Similarly, ZQL~\cite{fournet_zql:_nodate} and $Z\emptyset$~\cite{fredrikson2014zo} present languages and compilers for data certification, client side computation, and result verification. But those solutions require computations from the user every time a new query is performed, thus not supporting data offloading. 
Furthermore, as pointed out by Katz \etal~\cite{katz_improved_2018}, SNARKS suffer from one or both of the following problems: (a) they require a trusted set-up, and (b) they are insecure against quantum attacks (due to the use of bilinear maps\;\cite{parno2013pinocchio,costello_geppetto:_2015,backes_adsnark:_2015,fournet_zql:_nodate,fredrikson2014zo}). 
Interestingly, this holds also for the recent work of Gennaro \etal~\cite{gennaro_lattice-based_2018}; they use LWE homomorphic encryption to achieve post-quantum security of their encodings, but still rely on trusted setups through common reference strings~\cite{damgard_efficient_2000} and q-PKE~\cite{PKEgroth2010short} assumptions. This trusted setup constraint is addressed by Wahby \etal~\cite{wahby_doubly-efficient_2018}, but their solution relies on the discrete log problem that is \textit{de facto} not post-quantum secure. Finally, Ben-Sasson \etal~\cite{ben-sasson_scalable_2019} propose STARKs which achieve transparent and scalable arguments of knowledge by relying only on the collision-resistant hash and Fiat-Shamir heuristic assumptions. Although STARK-like systems solve a similar problem to ours, they follow a different approach where the bulk of the work is executed at the user side and no data offloading is considered, hence, multiple data computations require the creation of multiple proofs.

A radically different approach to proving statements in zero-knowledge comprises solutions based on multi-party computation (MPC) such as ZKBoo~\cite{giacomelli_zkboo_2016}, ZKB++~\cite{chase_post-quantum_2017}, Ligero~\cite{ames_ligero:_2017}, and KKW~\cite{katz_improved_2018}. These solutions are built on top of the MPC-in-the-head paradigm introduced by Ishai \etal~\cite{ishai_zero-knowledge_2009} and provide plausibly post-quantum secure mechanisms to prove the knowledge of an input to a public circuit that yields a specific output, due to a cut-and-choose approach over several runs. Our construction follows this approach to convince the service provider about the integrity of the user's data and its encryption. In a concurrent work~\cite{baum2020concretely}, Baum and Nof also present the use of MPC-in-the-head to prove lattice-based assumptions. However, their construction is based on a different problem (SIS, Short Integer Solution) and, unlike \syscomma, does not address the integrity check of the encrypted payload.

Another potential solution for enabling privacy-friendly and integrity preserving computations on authenticated data is to consider trusted hardware, such as Intel SGX~\cite{anati2013innovative,hoekstra2013using,mckeen2013innovative}, to process the data. The secure enclave could be positioned at the user side (returning a result certified by the enclave) or at the service provider side (decrypting ciphertexts and returning only the result of the computation). However, those solutions impose different trust assumptions and we consider them orthogonal to our work.

Several works are devoted to protecting privacy for smart metering (e.g., see surveys~\cite{erkin_privacy-preserving_2013,wang_cyber_2013}). However, only some of them, e.g.,~\cite{abdallah_lightweight_2018,li_preserving_2012}, address also the concern of data integrity and authenticity, by relying on custom homomorphic signature schemes. The applicability of such solutions is limited as, according to their technical specifications~\cite{SMETS2}, smart meters cope with standard digital signatures, e.g., ECDSA~\cite{nistHash}. Similarly, a number of works, e.g.,~\cite{ayday_protecting_2013,decristofaro2013secure,wang_healer:_2016}, employ homomorphic encryption to protect genomic privacy and to perform disease-susceptibility computations. Their model considers a medical unit that sequences the DNA of the user, who in turn protects it via homomorphic encryption before sending it for processing to a third-party. These solutions do not address the issue of data integrity or authenticity. Finally, several works are dedicated to both privacy and integrity in location-based activity tracking~\cite{zhu_applaus:_2011,wang_stamp:_2016,luo_veriplace:_2010,pham_securerun:_2016,saroiu2009enabling, hasan2005where}. They also are either peer-based~\cite{zhu_applaus:_2011,wang_stamp:_2016}, infrastructure-based~\cite{luo_veriplace:_2010,pham_securerun:_2016}, or hybrid\cite{saroiu2009enabling, hasan2005where}. SecureRun~\cite{pham_securerun:_2016} offers activity proofs for estimating the distance covered in a privacy and integrity preserving manner. Nevertheless, the system's accuracy relies on the density of access points, and it achieves at best a median accuracy of $78$\% (compared to $99.9$\% with \sysname on a similar dataset).

\section{Conclusion}\label{conc}

Data sharing among users and service providers in the digital era incurs a trade-off between privacy, integrity, and utility. In this paper, we have proposed a generic solution that protects the interests of both users and service providers. Building on state-of-the-art lattice-based homomorphic encryption and commitments, as well as zero-knowledge proofs, our construction enables users to offload their data to service providers in a post-quantum secure, privacy and integrity preserving manner, yet still enables flexible computations on it. We evaluated our solution on three different uses cases, showing its wide potential for adoption. As future work, we will explore extending \sysname to malicious service providers by combining secure computation techniques with differential privacy.


\section*{Acknowledgements}\label{ack}
We would like to thank our shepherd Ian Goldberg and the anonymous reviewers for their helpful feedback. We are also grateful to Henry Corrigan-Gibbs, Wouter Lueks, and the members of the EPFL Laboratory for Data Security for their helpful comments and suggestions. This work was supported in part by the grant \#2017-201 (DPPH) of the Swiss strategic focus area Personalized Health and Related Technologies (PHRT), and the grant C17-16 (SecureKG) of the Swiss Data Science Center.

\bibliographystyle{IEEEtranS}
\bibliography{literature}


\appendix

\section{Propositions and Lemmas}\label{app:lem}
For the sake of completeness, and for the reader's convenience, we present here the lemmas and propositions used in Section~\ref{analysis} to support the privacy and security analysis of our solution. 

\begin{lem}[Lemma\;1 in CKKS~\cite{cheon_homomorphic_2017}]\label{lem:ckks}
The encryption noise is bounded by $B_\text{clean}{=}8\sqrt{2}\sigma N{+}6\sigma \sqrt{N}{+}16\sigma\sqrt{hN}$. If $\bm{c}{\leftarrow} \text{Enc}_{\bm{pk}}(m)$ and $m{\leftarrow}\text{Ecd}(\bm{z},\Delta)$ for some $\bm{z}{\in}\mathbb{Z}[i]^{N/2}$ and $\Delta{>}N{+}2B_{clean}$, then $\text{Dcd}(\text{Dec}_{\bm{sk}}(\bm{c})){=}\bm{z}$.
\end{lem}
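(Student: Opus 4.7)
The plan is to expand the decryption equation, bound its residual noise term by term in the canonical embedding norm, and then verify that the scaling factor $\Delta$ is large enough for the encoder/decoder pair to absorb both the encoding rounding and the decryption noise.

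First I would make the decryption explicit. Writing $\bm{pk}=(b,a)$ with $b=-as+e$ and $e\leftarrow\chi_{\text{err}}$, the ciphertext $\bm{c}=r_0\cdot\bm{pk}+(m+e_0,e_1)$ equals $(r_0 b+m+e_0,\,r_0 a+e_1)$. Pairing with $\bm{sk}=(1,s)$ cancels the $as$ cross-term and yields
\[
\langle \bm{c},\bm{sk}\rangle \;=\; m \;+\; r_0 e + e_0 + e_1 s \pmod q,
\]
so it suffices to bound $\nu := r_0 e + e_0 + e_1 s$ in the canonical embedding infinity norm $\|\cdot\|_\infty^{\text{can}}$.

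Next I would handle the three summands separately by subgaussian tail bounds on the $N$ canonical evaluations. For $e_0\leftarrow\chi_{\text{err}}$, a standard $6\sigma$ tail bound gives $\|e_0\|_\infty^{\text{can}}\leq 6\sigma\sqrt{N}$. For $r_0 e$, each canonical coefficient is a sum of $N$ independent subgaussian products drawn from $\chi_{\text{enc}}\times\chi_{\text{err}}$; computing the per-slot variance and applying the same tail bound yields a contribution of order $8\sqrt{2}\sigma N$. For $e_1 s$, the argument is analogous, except that $s\leftarrow\chi_{\text{key}}$ is ternary with Hamming weight at most $h$, so only $h$ terms of the convolution are nonzero and the canonical variance scales like $\sigma^2 hN$ rather than $\sigma^2 N^2$, giving a bound of $16\sigma\sqrt{hN}$. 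Summing the three contributions produces $\|\nu\|_\infty^{\text{can}}\leq B_{\text{clean}}$.

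Finally I would close by considering the encode/decode pair. The encoder rounds the real polynomial $\frac{\Delta}{N}(\bar{\bm{U}}^T\bm{z}+\bm{U}^T\bar{\bm{z}})$ to the nearest element of $\mathcal{R}_q$, which induces a deviation of at most $N/2$ per slot under the canonical embedding. Combining this with $\|\nu\|_\infty^{\text{can}}\leq B_{\text{clean}}$, the total per-slot deviation of $\text{Dec}_{\bm{sk}}(\bm{c})$ from $\Delta\cdot\bm{z}$ is bounded by $N/2+B_{\text{clean}}<\Delta/2$ exactly by the hypothesis $\Delta>N+2B_{\text{clean}}$. Therefore the decoder's rescaling by $1/\Delta$ followed by rounding to $\mathbb{Z}[i]^{N/2}$ recovers $\bm{z}$ exactly. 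The main obstacle I expect is the variance accounting for the mixed products $r_0 e$ and $e_1 s$: the constants $8\sqrt{2}$ and $16$ emerge from keeping track of the exact per-slot variance under the canonical embedding and from exploiting the sparsity of $\chi_{\text{key}}$ versus the density of $\chi_{\text{err}}$, a subtlety that would have to be justified carefully to match the stated $B_{\text{clean}}$.
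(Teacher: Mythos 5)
This lemma is quoted verbatim from CKKS (Lemma~1 of Cheon \etal) and the paper gives no proof of it --- Appendix~\ref{app:lem} merely restates it for completeness --- so the only meaningful comparison is with the original CKKS argument, which your proposal reconstructs essentially exactly: the same expansion $\langle \bm{ct},\bm{sk}\rangle = m + r_0e + e_0 + e_1s$, the same term-by-term canonical-embedding bounds ($6\sigma\sqrt{N}$ for the fresh Gaussian, $16\sigma\sqrt{V}N$-type product heuristics giving $8\sqrt{2}\sigma N$ for $r_0e$ with $\chi_\text{enc}$ of per-coefficient variance $1/2$ and $16\sigma\sqrt{hN}$ for $e_1s$ with a weight-$h$ ternary key), and the same closing step that the encoding rounding ($\leq N/2$) plus $B_\text{clean}$ stays below $\Delta/2$ under $\Delta > N + 2B_\text{clean}$, so decoding rounds back to $\bm{z}$. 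The one point to state explicitly if you write this out is that the CKKS constants are high-probability heuristic bounds (canonical evaluations treated as Gaussians, with $6\sigma$-type tails and a factor $16$ for products), not worst-case bounds, so the lemma holds except with negligible probability rather than unconditionally --- your closing remark about the variance accounting is precisely where this caveat lives.
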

\noindent Ecd$(.,.)$ (resp. Dcd$(.)$) is the encoding (resp. decoding) function, $\Delta$ the scaling factor, and $h$ the Hamming weight of $\bm{sk}$.

\begin{lem}[Lemma\;4 in BDOP~\cite{baum_ecient_nodate}]\label{lem:bdop4}
Assume the distribution $\mathcal{D}$ and that $R_q$, m are chosen such that: 1) the min-entropy of a vector drawn from $\mathcal{D}$ is at least $(k + 1) \log |R_q| {+} \kappa$, where $\kappa$ is a (statistical) security parameter, and 2) the class of functions $\{f_a | a {\in} R_q^m\}$ where $f_a(r) {=} a \cdot r$ is universal when mapping the support of $\mathcal{D}$ to $R_q$. Then, the scheme is statistically hiding.
\end{lem}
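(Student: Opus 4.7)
The plan is to reduce statistical hiding to a ring-theoretic leftover hash lemma (LHL) applied to the universal family $\{f_a\}$ specified in hypothesis (2). First I would unpack what needs to be proved: since $\mathsf{BDOP}(\bm{m},\bm{r}) = \bm{A}\bm{r} + (\bm{0}_n,\bm{m})^T$ where $\bm{A}$ is a public matrix with coefficients in $R_q$ and $\bm{r}\sim\mathcal{D}$, statistical hiding is equivalent to showing that for any two messages $\bm{m},\bm{m}'$ the distributions of the commitments are within negligible statistical distance. By translation invariance of the additive group $R_q^{n+l_c}$, this reduces to proving that $\bm{A}\bm{r}$ is statistically close to the uniform distribution on the appropriate codomain, over the joint randomness of $\bm{A}$ (sampled during setup) and $\bm{r}\sim\mathcal{D}$.

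Next I would invoke the leftover hash lemma for universal families. Hypothesis (2) states precisely that $\{f_a(r)=a\cdot r\}_{a \in R_q^m}$ is universal when restricted to $\mathrm{supp}(\mathcal{D})$; this gives the collision bound
\begin{equation*}
\Pr_{a}\big[f_a(r)=f_a(r')\big] \le \tfrac{1}{|R_q|} \quad \text{for all } r\neq r' \in \mathrm{supp}(\mathcal{D}).
\end{equation*}
Applying LHL coordinate-wise to the $(n+l_c)$ output components of $\bm{A}\bm{r}$ (equivalently, viewing $\bm{A}\bm{r}$ as the image of $\bm{r}$ under a universal hash into $R_q^{n+l_c}$), one obtains a statistical distance bound of the form $\tfrac{1}{2}\sqrt{|R_q|^{n+l_c}\cdot 2^{-H_\infty(\bm{r})}}$ from uniform. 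Hypothesis (1) supplies $H_\infty(\bm{r}) \ge (k+1)\log|R_q|+\kappa$, which is comfortably larger than the output length $(n+l_c)\log|R_q|$ (since $k \ge n+l_c$ in BDOP) plus a $\kappa$-bit slack, yielding statistical distance at most $2^{-\kappa/2}$.

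Finally I would close the argument by the triangle inequality: the statistical distance between $\mathsf{BDOP}(\bm{m},\bm{r})$ and $\mathsf{BDOP}(\bm{m}',\bm{r})$ is upper-bounded by twice the distance of $\bm{A}\bm{r}$ from uniform, and hence is negligible in $\kappa$. This is exactly statistical hiding.

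The main obstacle I anticipate is the ring-theoretic instantiation of LHL. Classical LHL is stated over finite abelian groups with a $2$-universal family; here the codomain is the polynomial ring $R_q$ viewed as an additive abelian group, and ``universal'' must be interpreted with respect to $R_q$-linear maps rather than generic hashes. I would need to verify that the structured blocks of $\bm{A}$ in BDOP (which include identity submatrices $\bm{I}_n$ and $\bm{I}_{l_c}$, and hence are not fully uniform) do not interfere with the LHL bound. The standard resolution is to split $\bm{r}$ into the part multiplied by the uniformly random blocks $\bm{A}_1',\bm{A}_2'$ and the part multiplied by the identity blocks, and to apply LHL only to the former while absorbing the latter into a shift; hypothesis (1) is precisely calibrated to support this decomposition.
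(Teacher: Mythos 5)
The paper itself never proves this statement—it is imported verbatim as Lemma~4 of BDOP~\cite{baum_ecient_nodate}—and your argument is essentially the original proof from that source: hypotheses (1) and (2) are exactly the conditions needed to apply the (generalized) leftover hash lemma to the map $\bm{r}\mapsto \bm{A}\bm{r}$, concluding that $(\bm{A},\bm{A}\bm{r})$ is within $2^{-\kappa/2}$ of uniform, after which hiding follows by translation; the only small step you gloss over is that per-row universality of $f_a$ together with independence of the uniform rows is what upgrades to universality of the joint map into $R_q^{n+l_c}$ (your ``coordinate-wise'' phrasing). One caution about your closing paragraph: the identity and zero blocks are a feature of the later structured variant~\cite{baum_more_2018} that CRISP instantiates (where hiding is argued computationally from Module-LWE), not of the scheme this lemma concerns, whose key is uniform, so the plain LHL argument already closes the proof of the lemma as stated. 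Moreover, the fix you sketch—``absorbing'' the identity-multiplied coordinates into a shift—is not sound as written, because that shift consists of low-entropy, non-uniform coordinates of $\bm{r}$ that are correlated with the coordinates fed to the random blocks; making it rigorous requires a conditional (average) min-entropy bound on the residual randomness given those coordinates (e.g., via a min-entropy chain rule), which hypothesis (1), a bound on the whole vector, does not directly provide.
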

\noindent Recall that $\mathcal{D}$ denotes the distribution of an honest prover’s randomness for commitments.

\begin{lem}[Lemma\;10 in BDOP~\cite{baum_ecient_nodate}]\label{lem:bdop10}
The protocol $\Pi_{\text{Bound}}$ has the following properties:

\begin{itemize}
\itemsep-0.5em 
    \item \textbf{Correctness:} The verifier always accepts an honest prover when the protocol does not abort. The probability of abort is at most $2/\gamma + 2/\gamma_x$.
    \item \textbf{Special soundness:} On input a commitment $\bm{c}$ and a pair of transcripts $(\bm{c},d,(c_{aux},\bm{t},z,\bm{r}_z))$, $(\bm{c},d',(c'_{aux},\bm{t'},z',\bm{r}'_{z}))$ where $d{\neq}d'$, we can extract either a witness for breaking the auxiliary commitment scheme, or a valid opening of $\bm{c}$ where the message $x$ has norm at most $\gamma_{x}N\beta_x$.
    \item \textbf{Honest-verifier zero-knowledge:} Executions of protocol $\Pi_{\text{Bound}}$ with an honest verifier can be simulated with statistically indistinguishable distribution.
\end{itemize}
\end{lem}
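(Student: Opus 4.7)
The plan is to prove the three bullets of the lemma in turn, following the standard template for lattice $\Sigma$-protocols with rejection sampling. Throughout, I rely on the linearity of BDOP (as recalled in Section~\ref{prelims:commit}), the binding and hiding properties of the auxiliary commitment $C_{\text{aux}}$, and the fact that differences of distinct challenges from $\{0,1\}$ are units in $\mathcal{R}_q$.

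For \textbf{correctness}, I would trace a non-aborting honest execution. The prover sends $c_{\text{aux}}=C_{\text{aux}}(\bm{t})$ with $\bm{t}=\text{BDOP}(\bm{\mu},\bm{\rho})$, and on challenge $d$ responds with $(\bm{z},\bm{r}_z)=(\bm{\mu}+d\bm{m},\bm{\rho}+d\bm{r}_c)$. Linearity of BDOP gives
\[\text{BDOP}(\bm{z},\bm{r}_z)=\text{BDOP}(\bm{\mu},\bm{\rho})+d\cdot\text{BDOP}(\bm{m},\bm{r}_c)=\bm{t}+d\bm{c},\]
so the verifier's algebraic check passes; the norm check passes by the rejection-sampling contract on $(\bm{z},\bm{r}_z)$. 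The abort bound $2/\gamma+2/\gamma_x$ then reduces to applying Lyubashevsky's rejection-sampling lemma twice, once on the randomness-side shift $d\bm{r}_c$ (slack $\gamma$) and once on the message-side shift $d\bm{m}$ (slack $\gamma_x$), and taking a union bound over the two coordinates.

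For \textbf{special soundness}, I would take two accepting transcripts $(c_{\text{aux}},\bm{t},\bm{z},\bm{r}_z,d)$ and $(c'_{\text{aux}},\bm{t}',\bm{z}',\bm{r}'_z,d')$ with $d\neq d'$ and the same first message $c_{\text{aux}}$. If $\bm{t}\neq\bm{t}'$ this is immediately a pair of valid openings of $c_{\text{aux}}$ to distinct values, i.e.\ a witness breaking the binding of $C_{\text{aux}}$. Otherwise $\bm{t}=\bm{t}'$, and subtracting the two BDOP equations gives
\[\text{BDOP}(\bm{z}-\bm{z}',\,\bm{r}_z-\bm{r}'_z)=(d-d')\bm{c}.\]
Since $d-d'\in\{\pm 1\}$ is a unit in $\mathcal{R}_q$, multiplying through extracts an opening of $\bm{c}$ to message $x=(d-d')^{-1}(\bm{z}-\bm{z}')$ and randomness $(d-d')^{-1}(\bm{r}_z-\bm{r}'_z)$. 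The norm bound $\gamma_x N\beta_x$ follows by the triangle inequality applied to $\bm{z},\bm{z}'$ (each of norm at most $\gamma_x\beta_x$ by the verifier's small-norm check), together with the standard factor $N$ from multiplication in $\mathcal{R}_q$.

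For \textbf{honest-verifier zero-knowledge}, I construct a simulator that samples $d\leftarrow\{0,1\}$ uniformly, samples $(\bm{z},\bm{r}_z)$ directly from the centered target distribution of the rejection-sampling step, programs $\bm{t}:=\text{BDOP}(\bm{z},\bm{r}_z)-d\bm{c}$, and outputs $(C_{\text{aux}}(\bm{t}),d,\bm{t},\bm{z},\bm{r}_z)$. By the rejection-sampling guarantee, the real $(\bm{z},\bm{r}_z)$ in a non-aborting conversation is statistically close to this target distribution independently of the witness, so the simulated and real joint distributions of $(\bm{z},\bm{r}_z)$ coincide up to negligible statistical distance; $\bm{t}$ and $c_{\text{aux}}$ are deterministic functions of these and of the public $\bm{c}$, so indistinguishability propagates, with the hiding property of $C_{\text{aux}}$ invoked only to justify that the first flow leaks nothing before the challenge is fixed.

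The main obstacle I anticipate is the \emph{quantitative} abort-probability bound in correctness: matching exactly $2/\gamma+2/\gamma_x$ requires carefully aligning the two rejection-sampling invocations with the slack parameters declared for the message and randomness, worst-casing over $d\in\{0,1\}$ and over the adversary's choice of $\bm{m},\bm{r}_c$. The tracking of norms in special soundness — specifically the factor $N$ coming from polynomial multiplication in $\mathcal{R}_q$ — is routine but must also be done precisely so that the extracted opening satisfies the claimed $\gamma_x N\beta_x$ bound.
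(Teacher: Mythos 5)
First, a point of reference: the paper does not actually prove this statement. Lemma~\ref{lem:bdop10} is imported verbatim from the BDOP paper and restated in Appendix~\ref{app:lem} only ``for the sake of completeness,'' so there is no internal proof to compare yours against; the only meaningful check is whether your reconstruction matches the standard argument the citation relies on, and in structure it does. Your correctness step (linearity of BDOP plus the rejection condition), your special-soundness case split for two accepting transcripts with the same first flow $c_{\text{aux}}$ (either two distinct openings of $C_{\text{aux}}$, breaking its binding, or $\bm{t}=\bm{t}'$ and subtraction of the two verification equations with $d-d'=\pm1$ invertible), and your honest-verifier simulator that samples small $(\bm{z},\bm{r}_z)$ and programs $\bm{t}=\text{BDOP}(\bm{z},\bm{r}_z)-d\bm{c}$ are exactly the standard template; indeed the latter is precisely the simulator this paper invokes in the proof of Proposition~\ref{prop:priva}.

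Two quantitative points deserve correction or caution. The factor $N$ in the extracted-message bound $\gamma_x N\beta_x$ does not come from ``multiplication in $\mathcal{R}_q$'': since the challenge is a bit, $d-d'=\pm1$ and the extraction involves no polynomial multiplication at all, so no such blow-up occurs. The $N$ is instead inherited from the protocol's norm threshold on $\bm{z}$, i.e., from the width of the masking term $\bm{\mu}$, which must be on the order of $\gamma_x N\beta_x$ so that the per-coefficient rejection event has probability about $1/(\gamma_x N)$ and the union bound over the $N$ coefficients yields the claimed overall abort contribution of roughly $1/\gamma_x$ (similarly for $\bm{r}_z$ and $\gamma$); the triangle inequality on $\bm{z}-\bm{z}'$ then gives the stated bound. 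Relatedly, your abort accounting ``union bound over the two coordinates'' is the right idea but the bookkeeping is per coefficient (and per entry of $\bm{r}_z$), not per coordinate of the pair, and with uniform masking it is a simple counting argument rather than an application of the Gaussian rejection-sampling lemma; either route gives a bound of the shape $2/\gamma+2/\gamma_x$, but as you yourself anticipate, matching the exact constants requires aligning your masking ranges with the parameters $\gamma,\gamma_x,\beta_x$ as fixed in the BDOP protocol, which your write-up leaves open.
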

\noindent We remind that $\gamma$ is a constant that regulates the abort probability, $\gamma_x$ a constant piloting the norm of $\bm{z}$, and $\beta_x$ an upper bound on the norm of possible $x$.

\begin{lem}[Lemma\;3 in BDOP~\cite{baum_ecient_nodate}]\label{lem:bdop3}
From a commitment $\bm{c}$ and correct openings $\bm{r}$, $f$, $\bm{r'}$, $f'$ to two different message vectors $\bm{x}$, $\bm{x'}$, one can efficiently compute a solution $\bm{s}$ with $\|\bm{s}\|_{\infty} \leq 2 N m \gamma \beta \gamma_{D}^{2}$ to the Ring-SIS problem instance defined by the top row of $\bm{A}_1$.
\end{lem}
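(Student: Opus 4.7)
The plan is to exploit the algebraic structure of the top block $\bm{A}_1$, whose image in the commitment is independent of the message, so any two openings of the same $\bm{c}$ must produce a collision under $\bm{A}_1$. First I would write out the two relaxed opening equations explicitly: a correct opening $(\bm{r}, f, \bm{x})$ of $\bm{c}$ means that $f \cdot \bm{c} = \begin{pmatrix}\bm{A}_1\\ \bm{A}_2\end{pmatrix} \bm{r} + f \begin{pmatrix}\bm{0}_n \\ \bm{x}\end{pmatrix}$ in $\mathcal{R}_q$, with $\bm{r}$ short and $f$ drawn from the (small-norm) challenge set. Multiplying the first opening equation by $f'$ and the second by $f$ and subtracting, the $f f' \bm{c}$ terms cancel, yielding the two block identities $\bm{A}_1 \bm{s} = \bm{0}$ and $\bm{A}_2 \bm{s} = -f f' (\bm{x} - \bm{x}')$, where the candidate Ring-SIS solution is $\bm{s} = f' \bm{r} - f \bm{r}'$.

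Next I would establish that $\bm{s} \neq \bm{0}$, which is the non-triviality step without which the Ring-SIS reduction is vacuous. Arguing by contradiction, if $\bm{s}$ were zero then the bottom identity would force $f f' (\bm{x} - \bm{x}') = 0$ in $\mathcal{R}_q$. Since $f$ and $f'$ are nonzero elements of the challenge set, which is configured (as described in Section~\ref{archi:TP}) so that small-norm elements are invertible in $\mathcal{R}_q$ except with negligible probability, the product $f f'$ is invertible; this forces $\bm{x} = \bm{x}'$, contradicting the hypothesis that the openings are to distinct message vectors. Hence $\bm{s}$ is a nonzero element of the kernel of $\bm{A}_1$, i.e., a valid Ring-SIS witness for the instance defined by the top row.

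The remaining step is the norm bound. Applying the triangle inequality followed by the standard ring-multiplication estimate $\|a \cdot b\|_\infty \leq N \|a\|_\infty \|b\|_\infty$ gives $\|\bm{s}\|_\infty \leq N \left(\|f'\|_\infty \|\bm{r}\|_\infty + \|f\|_\infty \|\bm{r}'\|_\infty\right)$. Substituting the rejection-sampled bound $\|\bm{r}\|_\infty, \|\bm{r}'\|_\infty \leq m \gamma \beta$ and the challenge-norm bound $\|f\|_\infty, \|f'\|_\infty \leq \gamma_D$ yields the claimed $2 N m \gamma \beta \gamma_D^2$. The main obstacle is the non-triviality step: ensuring $\bm{s} \neq \bm{0}$ needs a sufficiently rich subset of invertible small-norm elements in $\mathcal{R}_q$, which in the composite-modulus setting of \sysname follows from the cardinality argument of Section~\ref{archi:TP} but is weaker than in the original prime-cyclotomic BDOP setting. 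A secondary technical point is to keep the norm accounting consistent so that the two $\gamma_D$ factors and one $\gamma$ align precisely with the bounds that the rejection-sampling steps guarantee on $\bm{r}$ and $\bm{r}'$.
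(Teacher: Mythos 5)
A preliminary remark: the paper itself contains no proof of this statement. It is quoted verbatim in Appendix~\ref{app:lem} as Lemma~3 of BDOP~\cite{baum_ecient_nodate}, imported solely to back the binding property invoked in the integrity analysis (Proposition~\ref{prop:inte}); the proof lives in the cited work. Measured against that original argument, your reconstruction takes exactly the standard route: write the two relaxed opening equations $f\bm{c} = \bm{A}\bm{r} + f(\bm{0}_n,\bm{x})^T$ and $f'\bm{c} = \bm{A}\bm{r}' + f'(\bm{0}_n,\bm{x}')^T$, cross-multiply and cancel $ff'\bm{c}$, read off $\bm{A}_1\bm{s}=\bm{0}$ and $\bm{A}_2\bm{s} = -ff'(\bm{x}-\bm{x}')$ for $\bm{s} = f'\bm{r} - f\bm{r}'$, rule out $\bm{s}=\bm{0}$ via invertibility of $ff'$, and bound $\|\bm{s}\|_\infty$ by the triangle inequality and $\|a\cdot b\|_\infty \le N\|a\|_\infty\|b\|_\infty$. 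Your caveat about non-triviality is also well placed: in \syscomma's composite-modulus ring the invertibility of the challenge elements is guaranteed only by the counting argument of Section~\ref{archi:TP}, not by the blanket small-norm-invertibility condition of the original setting, which is precisely why the paper relaxes that condition explicitly.

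The one point that does not close as written is the constant. With the bounds you substitute, $\|\bm{r}\|_\infty,\|\bm{r}'\|_\infty \le m\gamma\beta$ and $\|f\|_\infty,\|f'\|_\infty \le \gamma_D$, your own inequality gives $\|\bm{s}\|_\infty \le N(\gamma_D\, m\gamma\beta + \gamma_D\, m\gamma\beta) = 2Nm\gamma\beta\gamma_D$, i.e., one factor of $\gamma_D$ short of the stated $2Nm\gamma\beta\gamma_D^2$. The mismatch is not in the reduction but in the guessed notion of ``correct opening'': in BDOP the openings to which this lemma applies are the relaxed openings extracted from the $\Sigma$-protocol transcripts, and their admissible randomness norm already carries a challenge-norm factor (of order $m\gamma\beta\gamma_D$ rather than $m\gamma\beta$), which is where the second $\gamma_D$ comes from. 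To make the write-up self-consistent you must either import BDOP's actual opening definition (after which the claimed bound follows by the same two lines), or keep your assumed bounds and state the correspondingly smaller bound; as it stands, the final substitution does not produce the inequality you claim.
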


\begin{prop}[Proposition\;4.2 in ZKBoo~\cite{giacomelli_zkboo_2016}]\label{lem:zkboo}
The ZKBoo protocol is a $\Sigma$-protocol for the relation $R_\Phi$ with 3-special soundness.
\end{prop}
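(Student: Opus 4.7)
The plan is to verify that ZKBoo satisfies the three defining properties of a $\Sigma$-protocol for the relation $R_\Phi = \{(y, x) : y = \Phi(x)\}$: completeness, $3$-special soundness, and honest-verifier zero-knowledge (HVZK). Each property reduces to a corresponding property of the underlying $(2,3)$-decomposition of the circuit $\Phi$, together with the hiding/binding guarantees of the commitment scheme used to bind the players' views.

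For completeness, I would appeal directly to the correctness of the $(2,3)$-decomposition. An honest prover shares $x$ among three simulated players via the prescribed \textsc{Share} function, evaluates the circuit gate by gate using the share/view/output functions, and commits to each player's view together with its random tape. For any verifier challenge $e \in \{1,2,3\}$, the opened pair $(\textbf{View}_e, \textbf{View}_{e+1})$ is wire-consistent at every gate and the three output shares sum to $y = \Phi(x)$, so the verifier's checks pass with probability $1$.

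The heart of the argument is $3$-special soundness, and it is also the main obstacle. I would construct an extractor that, given three accepting transcripts that share the same first message (the three commitments) but carry distinct challenges $e \in \{1,2,3\}$, uses the binding property of the commitment scheme to recover a unique triple $(\textbf{View}_1, \textbf{View}_2, \textbf{View}_3)$ together with the associated random tapes. Acceptance on each challenge forces the corresponding consecutive pair of views to be mutually consistent at every gate; stitching the three pairs together yields a globally consistent MPC transcript. Applying the inverse of \textsc{Share} to the input wires produces a candidate witness $x'$, and the surjectivity/correctness of the $(2,3)$-decomposition ensures $\Phi(x') = y$. The delicate point is to check this gate by gate, separating linear gates (where consistency is automatic from the affine share relations) from multiplicative gates (where one must argue that consistent triples of output shares, given the committed tapes, can only arise from the honest evaluation on some well-defined input); this local-to-global step is where most of the technical care goes.

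For HVZK, I would build a simulator $\mathcal{S}$ that, on input $(y, e)$ with $e$ the verifier's challenge, samples the tape and view of player $e$ honestly, and simulates the view of player $e+1$ by exploiting the $2$-privacy of the decomposition: linear gates follow the share rules, while at each multiplicative gate the wire value of player $e+1$ is drawn uniformly at random in place of the honest blinded computation. The remaining output share is defined as $y_{e+2} = y - (y_e + y_{e+1})$, and the unopened commitment $c_{e+2}$ is simulated to an arbitrary value. Indistinguishability from a real transcript follows from the hiding property of the unopened commitment and from the fact that in the honest protocol the view entries at multiplicative gates are themselves uniformly distributed, so $\mathcal{S}$'s output distribution matches the real one. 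Combining completeness, the extractor, and $\mathcal{S}$ then yields the claimed $\Sigma$-protocol with $3$-special soundness.
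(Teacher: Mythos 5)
Your sketch is essentially the standard proof of this result: the paper itself does not prove the proposition but imports it verbatim from ZKBoo (Proposition~4.2 of Giacomelli et al.), and the route you take---completeness from correctness of the $(2,3)$-decomposition, an extractor that uses commitment binding on three accepting transcripts with challenges $\{1,2,3\}$ to pin down all three views and tapes, check every gate pairwise, and reconstruct $x'$ from the input shares with $\Phi(x')=y$, plus an HVZK simulator that samples the second opened player's multiplication-gate entries uniformly, sets $y_{e+2}=y-(y_e+y_{e+1})$, and fakes the unopened commitment via hiding---is exactly the argument given there and echoed in this paper's own privacy and integrity analyses (Propositions~5.1 and~5.2). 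I see no gap worth flagging at this level of granularity.
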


\section{ZKB++}\label{app:zkb}

\descr{(2,3)-decomposition of a Circuit from \cite{giacomelli_zkboo_2016, chase_post-quantum_2017}} For a function $f$ represented by a circuit $\mathcal{C}$, a (2,3)-decomposition consists of a series of algorithms $(\textsc{Share}, \textsc{Output}, \textsc{Rec}) \cup \textsc{Update}$, with $\textsc{Share}$ surjective, and allows to create three separate views of the circuit. Then, ZKB++ enables to prove knowledge of a secret input $x$ such that $f(x)=y$, with $y$ the publicly known output. For an iteration $k$, the views for player $i$ is denoted by a vector $\textbf{View}_i^k{=}\{\text{view}_i^{0},{\cdots},\text{view}_i^{N_g}\}$.
\begin{figure}[h!]
    \centering
    \includegraphics[width=.45\textwidth]{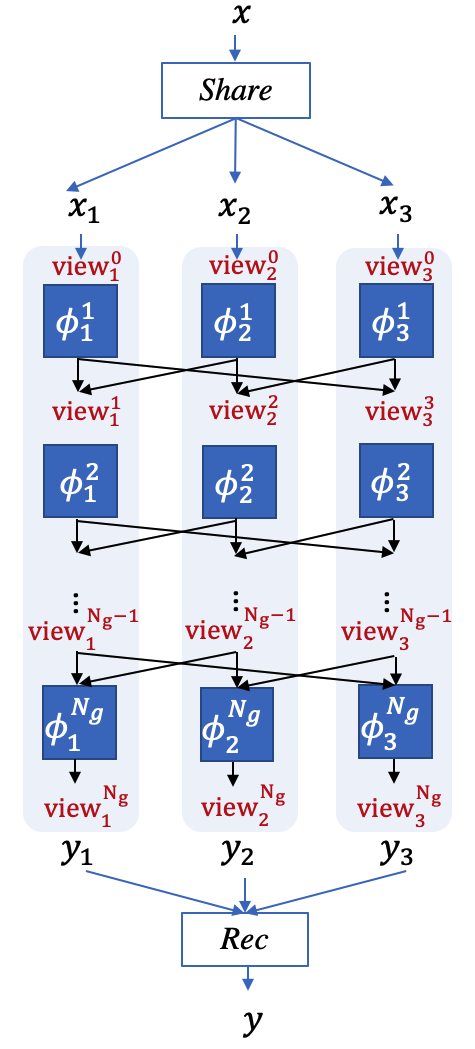}
    
    \caption{(2,3)-decomposition of a circuit.}
    \label{fig:zkboo}
    
\end{figure}

\begingroup
\setlength\abovedisplayskip{0.2em}
\setlength\belowdisplayskip{0.2cm}
\begin{itemize}\itemsep-0.5em 
\item The \textsc{Share} algorithm splits a secret $x$:
\[\left(\operatorname{view}_{1}^{0}, \operatorname{view}_{2}^{0}, \operatorname{view}_{3}^{0}\right) {=} \textsc{Share}(x, \bm{k}_1, \bm{k}_2, \bm{k_3}),\] 
with $\bm{k}_i$ being a random tape, $\forall i \in \{1, 2, 3\}$.
\item $\mathcal{F} {=} \bigcup\limits_{i=1}^{N_g} \{\phi_1^j, \phi_2^j, \phi_2^j\}$, where $N_g$ is the number of gates in $\mathcal{C}$ and $\phi_i^j$ is the $j$-th gate of player $i$.
\item The \textsc{Update} algorithm evaluates the gates into the views: 
$$\operatorname{view}_i^{j+1} {=}  \phi_i^j(\operatorname{view}_i^{j}, \operatorname{view}_{i+1}^{j}, \bm{k}_i, \bm{k}_{i+1}),$$
where $ j \in [0, N_g-1], \forall i \in \{1, 2, 3\}$.
\item The \textsc{Output} algorithm returns the output wires: \[y_i {=} \textsc{Output}(\operatorname{view}_i^{N_g}), \forall i \in \{1, 2, 3\}.\]
\item \textsc{Rec} reconstructs the output: $y {=} \textsc{Rec}(y_1, y_2, y_3).$
\end{itemize}

The intermediary functions $\phi_i^j$ are defined by running a linear decomposition on $\mathcal{C}$ such that: 
\begin{itemize}\itemsep-0.5em 
\item Each player $i$ has wire $\bm{w}^i$ and $w_{k}^{(1)}+w_{k}^{(2)}+w_{k}^{(3)}$ is equal to the wire state of the $k$-th gate of $\mathcal{C}$.
\item \textbf{Addition by a constant $d$:} $\forall i {\in} \{1,2,3\}$ \begin{equation*}
    w_{b}^{(i)}=\begin{cases}
     w_{a}^{(i)}+d & {\text{~if }  i=1,}\\ 
    {w_{a}^{(i)}} & {~\text{otherwise.}}
    \end{cases}
\end{equation*}
\item \textbf{Multiplication by a constant $d$:} $\forall i {\in} \{1,2,3\}$ $${w_{b}^{(i)}=d \cdot w_{a}^{(i)}}.$$
\item \textbf{Binary addition:} $\forall i \in \{1,2,3\}$ $${w_{c}^{(i)}=w_{a}^{(i)}+w_{b}^{(i)}}.$$
\item \textbf{Binary multiplication:} $\forall i \in \{1,2,3\}$ 
\[w_{c}^{(i)}=w_{a}^{(i)} \cdot w_{b}^{(i)}+w_{a}^{(i+1)} \cdot w_{b}^{(i)}+w_{a}^{(i)} \cdot w_{b}^{(i+1)}+R_{i}(c)-R_{i+1}(c),\]
where $R_i(c)$ is the $c$-th output of a pseudo-random generator seeded with $\bm{k}_i$.
\end{itemize}
The 2-privacy property ensures that revealing two views (i.e., opening two players) does not leak information about the witness. 
Overall, the ZKB++ protocol works as follows:
\begin{enumerate}[label=(\roman*)]\itemsep-0.5em 
\item The prover emulates three players. For each iteration $i \in [1, t]$, each player $j\in\{0,1,2\}$ evaluates the (2,3)-decomposition of the circuit and:

\begin{itemize}\itemsep-0.5em 
\item[--] commits to:
\[\left[C_{j}^{(i)}, D_{j}^{(i)}\right] {\leftarrow}\left[H \left(k_{j}^{(i)}, x_{j}^{(i)}, \textbf {View}_{j}^{i}\right), k_{j}^{(i)} \| \textbf {View}_{j}^{i}\right],\] 
\item[--] and lets $a^{(i)}=\left(y_{1}^{(i)}, y_{2}^{(i)}, y_{3}^{(i)}, C_{1}^{(i)}, C_{2}^{(i)}, C_{3}^{(i)}\right).$
\end{itemize}

\item The prover computes the challenge $e {=} H\left(a^{(1)}, {\ldots}, a^{(t)}\right)$ and reads it as a value $e^{(i)}{\in}\{1,2,3\}$, for all $i{\in} [1, t]$.
For all $i{\in} [1, t]$, the prover lets $b^{(i)}{=}\\\left(y_{e^{(i)}+2}^{(i)}, C_{e^{(i)}+2}^{(i)}\right)$ and \[z^{(i)} \leftarrow \left\{\begin{array}{ll}{\left(\textbf{View}_{2}^{(i)}, \bm{k}_{1}^{i}, \bm{k}_{2}^{i}\right)}& {\text { if } e^{(i)}=1,} \\ {\left(\textbf{View}_{3}^{(i)}, \bm{k}_{2}^{(i)}, \bm{k}_{3}^{(i)}, x_{3}^{(i)}\right)}& {\text { if } e^{(i)}=2,} \\ {\left(\textbf {View}_{1}^{(i)}, \bm{k}_{3}^{(i)},\bm{k}_{1}^{(i)}, x_{3}^{(i)}\right)}& {\text { if } e^{(i)}=3.}\end{array}\right. \]

\item Then, the prover computes the proof $$p {=} \left[e,\left(b^{(1)}, z^{(1)}\right),\left(b^{(2)}, z^{(2)}\right), {\cdots},\left(b^{(t)}, z^{(t)}\right)\right].$$
\item The verifier, for each iteration $i \in [1, t]$, reconstructs the input and output views that were not given as part of the proof by:
\begin{itemize}\itemsep-0.5em 
\item[--] running the circuit for player $e^i$ with the information in the proof, and
\item[--] computing $C_j^i, D_j^i$, and $a^i$, with the information provided in $b^i$.
\end{itemize}
\item Finally, the verifier computes the challenge ${e'{=}H\left(a^{(1)}, {\ldots}, a^{(t)}\right)}$ and checks that $e{\stackrel{?}{=}}e'$ is true.
\end{enumerate}
\endgroup

\section{Other Uses Cases}\label{app:uc}

As discussed in Section~\ref{intro}, the considered three-party model is generic enough to accommodate various use cases where simultaneously protecting privacy and integrity is of paramount importance. For instance, in business auditing, a company (i.e., the user) may have its financial books certified by a legal entity responsible for it, e.g., the director of the financial department (i.e., the data source) and a third-party auditing entity (i.e., the service provider) verifies their correctness in a privacy-preserving manner~\cite{backes_adsnark:_2015}. Loyalty programs constitute another use case that can be covered by the envisioned model. In this setting, customers of retailers obtain certifications about their purchases from payment terminals installed in shops by a bank and, to obtain discounts, use the certified data with the retailer ~\cite{blanco_privacy_2016}. Similarly, in personalized health applications, wearable devices act as data sources that sign users' blood pressure, heart rate, or quality of sleep. These data can be used with service providers that perform machine-learning operations to predict the user's health status and provide recommendations. Finally, other location-based activities, such as pay-as-you-drive insurance or dynamic road tolling~\cite{PrETP_2010}, are very similar to the location-based activity tracking case we evaluated in our work.

In Table~\ref{tab:uc}, we present the parties involved in the various use cases discussed in Section~\ref{intro}.

\begin{table}[H]
\footnotesize
\begin{center}
\begin{tabular}{ L{1in} L{1in} L{1in} }
\midrule
 \textbf{Use Case} & \textbf{Data Source} & \textbf{Service Provider} \\ 
\midrule
 Smart Metering & Smart Meter &  Energy Company billing and load balancing entities\\ 
\midrule
 Disease Susceptibility Test & Medical Unit or Sequencing Center & direct-to-consumer / pharmaceutical companies \\ 
\midrule
 Location-Based Activity Tracking & Cellular Network & Social Networks, Insurance Companies, etc.\\
\midrule
 Dynamic Road Tolling & Cellular Network & Tolling Company\\
\midrule
 Pay-as-you-drive Insurance & Cellular Network & Insurance Company\\
\midrule
 Business Auditing & Head of Financial Dep. & Auditing Company\\
\midrule
 Loyalty Program & Payment Terminal & Retailer \\
\midrule
\end{tabular}
\end{center}

\caption{Other examples of use cases.}
\label{tab:uc}
\end{table}

\section{Bound Proof of BDOP Commitments}\label{app:bdopBP}
Consider a given a commitment $\bm{c}=\text{BDOP}(m, \bm{r}_c)$ of a single input value $m$. The BDOP commitment scheme~\cite{baum_ecient_nodate} enables the prover to claim that the norm of $m$ is small through a bound proof. The protocol works as follows:
\begin{enumerate}\itemsep-0.5em 
    \item The prover computes a commitment $\bm{t} = \text{BDOP}(\mu, \bm{\rho})$ for $\mu$ sampled uniformly in $\mathcal{R}_q$ and $\bm{\rho} {\in} \mathcal{S}_\beta^{l_c}$ a valid commitment noise subjected to $\| \mu \|_{\infty}\leq \beta_\mu$ and $\| \bm{\rho} \|_{\infty}\leq \beta_\rho$. Then, the prover commits to this commitment through an auxiliary commitment and sends $c_{\text{aux}}{=}C_{\text{aux}}(\bm{t})$ to the verifier.
    \item The verifier randomly picks and sends a challenge $\text{d}{\in} \{0,1\}$.
    \item The prover checks the correctness of the challenge and computes $z{=}\mu+\text{d}m$ as well as $\bm{r}_z{=}\bm{\rho} + \text{d}\bm{r}_c$ in order to provide a valid opening of $\bm{t} {+} \text{d} \bm{c}$. The prover aborts if the resulting commitment is not properly formatted. Otherwise, it sends $z$, $\bm{r}_z$, and the opening of the auxiliary commitment to the verifier.
    \item The verifier checks the correctness of the opening and that the norm of $z$ is small. 
\end{enumerate}
At the end of the protocol, the verifier is ensured that the norm of the secret $m$ is below a specific threshold. This protocol can be made non interactive with the Fiat-Shamir heuristic and iterated to increase the soundness. We refer the reader to~\cite{baum_ecient_nodate} for more details.


\end{document}